\documentclass[acmtog,nonacm]{acmart}
\AtBeginDocument{%
  }

\setcopyright{none}
\acmISBN{978-1-4503-XXXX-X/2018/06}

\usepackage[ruled,vlined]{algorithm2e}
\usepackage{float}
\usepackage{subcaption}
\newtheorem{remark}{Remark}
\begin{document}

\title{From Walks to Skeletons: A Reusable Transport Architecture for Interactive Monte Carlo PDE Solvers}

\author{Yunfeng Liao}
\authornote{Both authors contributed equally to this research.}
\author{Yucheng Wu}
\authornotemark[1]

\author{Xiucheng Li}
\affiliation{\institution{Harbin Institute of Technology (Shenzhen)}\city{Shenzhen}\state{Guangdong}\country{China}}









\begin{abstract}
Walk-on-Spheres (WoS) enables mesh-free solutions of elliptic PDEs, but conventional methods repeatedly discard sampled trajectories, limiting their efficiency in interactive applications. We introduce reusable transport skeletons, a persistent representation that organizes stochastic walks into a shared computational graph. Our framework enables efficient reuse across spatial queries and boundary conditions, while supporting progressive refinement, structure-preserving compression, and local adaptation to dynamic geometry. It further enables interactive editing of transport between selected regions. Experiments demonstrate substantial computational savings while preserving solution accuracy. By transforming transient stochastic trajectories into reusable transport structures, our method provides a flexible foundation for interactive Monte Carlo PDE solving.
\end{abstract}



\keywords{Walk on Spheres, Monte Carlo PDE Solvers, Reusable Stochastic Transport, Real-time Interactive Simulation.}
\begin{teaserfigure}
  \includegraphics[width=\textwidth]{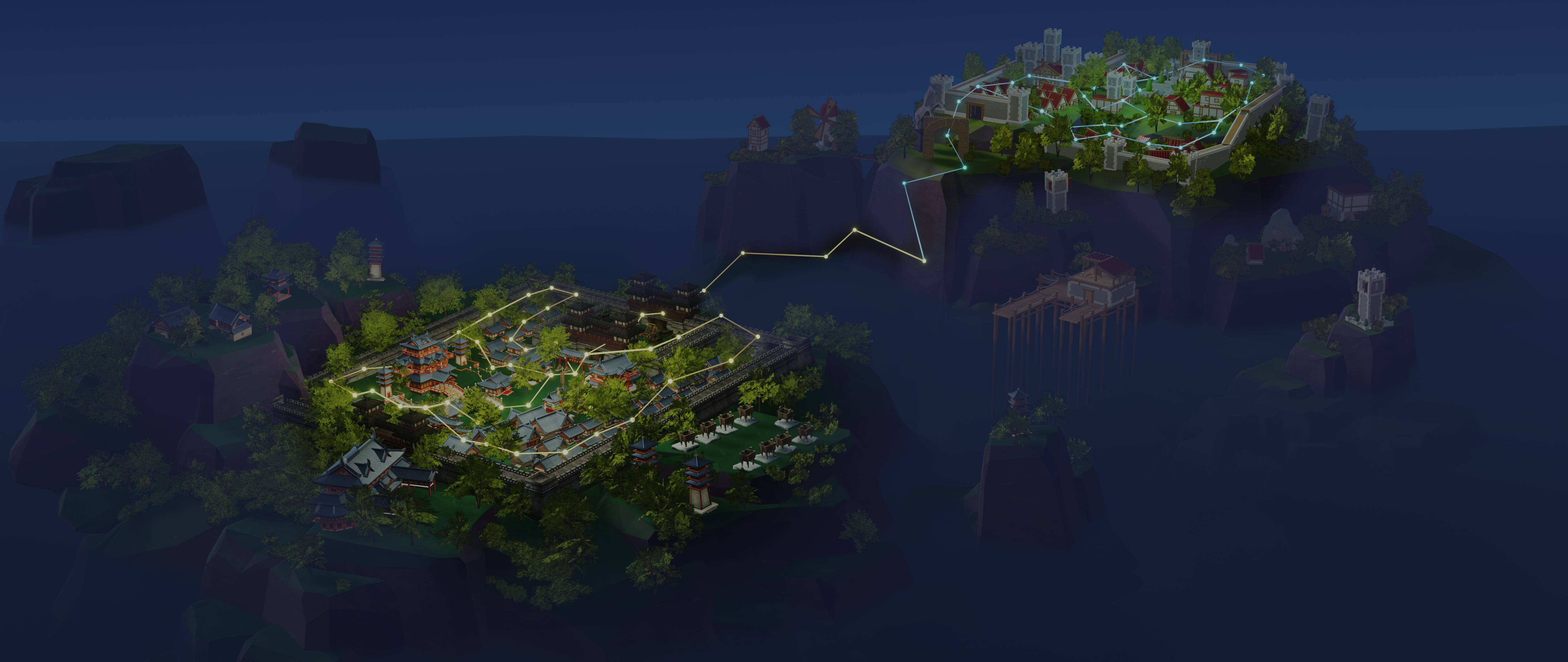}
\caption{\textbf{Interactive PDE solving with reusable stochastic transport.} Illuminated paths visualize transport connections across a complex multi-island environment. Rather than independently resampling trajectories, a persistent and shared transport skeleton retains sampled continuations, adaptive to new queries by reusing existing paths, boundary-condition changes by graph reevaluation, and local geometry edits by preserving unaffected transport while repairing invalid connections.}
  \Description{Enjoying the baseball game from the third-base
  seats. Ichiro Suzuki preparing to bat.}
  \label{fig:teaser}
\end{teaserfigure}


\maketitle
\thispagestyle{fancy}\fancyfoot[C]
{\footnotesize Preprint}

\section{Introduction}
\label{sec:introduction}

Interactive PDE-based applications often revisit the same domain while changing boundary values, adding queries, moving objects, or adjusting how different regions influence one another. Walk on Spheres (WoS) is attractive in this setting because it estimates solutions of elliptic PDEs directly at selected points, without constructing a volumetric mesh \citep{sawhney2020monte}. Yet its usual computational model is transient: each query launches stochastic walks, evaluates their boundary or source contributions, and discards the trajectories. As a result, even small changes can trigger repeated exploration of largely unchanged geometry.

Prior work has reduced this redundancy by caching solution values, derivatives, or local harmonic approximations \citep{miller2023boundary,bakbouk2023mean,zhou2025harmonic}. More recent methods reuse neighboring trajectories, fixed spherical probes, or sampled first-passage operators \citep{czekanski2026neighbors,huang2026probes,jambon2026subdomains}. WoS-based cage cutting also preserves paths unaffected by local cuts \citep{saeed2026cutting}. These methods establish that stochastic information need not be discarded, but address reuse through particular query neighborhoods, spatial interfaces, or applications. What remains useful for interactive workflows is a representation that retains the \emph{dependencies between sampled continuations} and can itself be queried, refined, compressed, and repaired as the problem changes.

We introduce \emph{reusable transport skeletons}: directed acyclic graphs (DAGs) assembled from WoS paths, in which spatial states reference already completed downstream computations. A new walk can attach to this shared structure without repeating its continuation. This connection is not a geometric shortcut: simply snapping nearby states together introduces bias, while forcing walks into tiny attachment neighborhoods can inflate both graph size and importance-weight variance. We instead turn to Poisson-kernel reweighting, accepting attachments only under a prescribed bound on their importance factors. Boundary values and sampled source terms then enter as leaves of the same graph, allowing its retained transport to be evaluated by a dependency traversal when the data change.

Making such a graph reusable also requires preserving spatial access as it evolves. We introduce conditional path refinement to create valid intermediate continuation states from retained paths, and shared spherical interfaces to collect reusable continuations from multiple walks. Progressive and variance-guided sampling extends the graph where additional information is most valuable, while coverage-preserving compression removes redundant states without sacrificing access for future queries. For geometry edits, we retain unaffected transport, repair invalidated flows locally via Doob's $h$-transform to maintain sampled downstream trajectories instead of casting them away like~\citep{saeed2026cutting}, and use bidirectional WoS~\citep{qi2022bidirectional} to capture tiny objects significantly affecting the solution. The resulting representation additionally exposes regional transport interactions that can be modified through on-demand low-rank corrections, without rebuilding the underlying graph.
Our contributions are threefold:
\begin{itemize}
\item A persistent stochastic transport DAG enables reevaluation under changing boundary and source data through bounded-weight path attachment.
\item Conditional path refinement and shared hub interfaces extend spatial reuse, while variance-guided refinement and graph compression balance sampling effort and storage.
\item Interactive transport adaptation repairs affected transitions without discarding shared continuations, with matrix-free low-rank edits enabling regional transport control.
\end{itemize}

\section{Related Work}
\label{sec:related-work}

\paragraph{Caching and Transport Reuse for Grid-Free Monte Carlo PDE Solvers.}
WoS \citep{muller1956continuous}, introduced to geometry processing by \citet{sawhney2020monte}, solves elliptic PDEs pointwise without volumetric meshes. Extensions address varying coefficients \citep{sawhney2022gridfree}, mixed Neumann--Dirichlet conditions via Walk on Stars \citep{sawhney2023stars}, and reverse Green transport via bidirectional WoS \citep{qi2022bidirectional}. Boundary Value Caching stores values and derivatives \citep{miller2023boundary}; Mean Value Caching shares volume-mean estimates \citep{bakbouk2023mean}; Harmonic Caching combines local expansions \citep{zhou2025harmonic}; and Talking to Neighbors reuses walks for nearby queries \citep{czekanski2026neighbors}. Probe-based WoS organizes intermediate states on fixed spheres \citep{huang2026probes}, while Walk on Decomposed Subdomains (WoDS) couples reusable first-passage operators via sparse deterministic systems \citep{jambon2026subdomains}. We likewise share stochastic computation, but retain continuation dependencies in a dynamically built DAG supporting weighted attachment, progressive refinement, and coverage-preserving compression without fixed probes or discretized interfaces.

\paragraph{Adaptive Sampling and Interactive Updates.}
Learned directional and variance-aware path guiding in rendering \citep{muller2017guiding,rath2020variance} motivates steering WoS, despite unstable likelihood ratios. Differential WoS estimates geometry/boundary sensitivities \citep{miller2024differential}; WoDS updates affected local operators \citep{jambon2026subdomains}; and WoS-based cage cutting reuses cut-unaffected paths to update harmonic weights \citep{saeed2026cutting}. Beyond per-query guiding or cutting-specific reuse, our bounded-weight attachment and variance-guided refinement maintain transport for repeated boundary solves, local geometry repair, and low-rank regional edits.

\section{Reusable WoS Skeleton Construction}
Let $\Omega\subset\mathbb R^3$ be the domain with Dirichlet boundary data $g$.
For the source-free problem, the solution is represented by
$u(x)=\mathbb{E}[g(X_\tau)\mid X_0=x]$, where $X_\tau$ is the boundary
arrival of Brownian motion started at $x$, and $\tau$ its first exit time, allowing WoS to estimate this expectation via sampled walks. Independent queries repeatedly explore similar regions and recompute their continuations. The Markov property motivates retaining these continuations since once a spatial state is fixed, its remaining evolution does not depend on the prefix that reaches it. We therefore retain sampled WoS trajectories as directed chains, with multiple trajectories forming outgoing branches. When a new walk reaches a state $p$, it may reuse the completed continuations of an existing node $q$ by connecting to its downstream branches rather than generating new walks termed \textit{path attachment}. Such shared dependencies form a transport directed acyclic graph (DAG), dubbed \textit{skeleton}. 

\textit{Skeleton Data-structure}. Each DAG node $p$ stores a position $x_p$ consisting of these types: query nodes are the locations where the solution is requested, boundary nodes supply terminal data, and the remaining states are internal nodes that a certain WoS path passes by. An edge $p\to q$ means that the estimate at $p$ depends on the continuation at $q$ and let $\mathcal C(p)$
denote these outgoing dependencies. To support reuse, the skeleton also retains how each outgoing branch was sampled. For a completed node $i$, let $\mathcal S(i)$ denote its sampling groups. Each group $s$ records the original sphere $B_s=B(c_s,R_s)\subset\Omega$, its surface-area proposal density $r_s$, and the downstream nodes $\mathcal C_s(i)$ sampled on $\partial B_s$, so that $\mathcal C(i)=\bigcup_{s\in\mathcal S(i)}\mathcal C_s(i)$. A node may contain groups from different spheres because attachments inherit the sampling records of reused branches. These records allow the same sampled continuations to be reweighted for new starting positions. For a fixed skeleton, internal nodes $\hat u_p$ and boundary nodes $\hat u_b$  are given by 
\[
\hat u_p=\sum_{q\in\mathcal C(p)}w_{pq}\hat u_q,
\qquad
\hat u_b=g(x_b),
\]
where $b$ denotes a boundary node and edge weight $w_{pq}$ includes averaging and
importance corrections. Boundary contributions propagate toward queries
in reverse topological order, evaluating each shared continuation once. Since the graph retains boundary positions, changing boundary data supplies new leaf
values without retracing the walks, enabling us to estimate queried values via a cheap traversal on the DAG. 



The main challenge of path attachment is that the current state and the reused node generally occupy different positions. Directly inheriting the latter's sampled continuations introduces bias, while restricting attachment to nearly identical states limits opportunities for reuse. More aggressive attachment like directional guiding may improve reuse but may produce unstable importance weights as further analyzed in Appendix~\ref{app:guided-attachment}. We first introduce a change-of-measure attachment rule with bounded importance weights in Section~\ref{sec:path-attachment} motivated by~\citep{czekanski2026neighbors} to address this tradeoff, and then describe efficient evaluation of the resulting DAG in Section~\ref{sec:graph-evaluation}, followed by progressive refinement of uncertain continuations for skeleton nodes with high variances in Section~\ref{sec:progressive-refinement}. Finally, Section~\ref{sec:source-dag} extends the skeleton construction to Poisson equations with source terms.

\subsection{Path Attachment}
\label{sec:path-attachment}
Path attachment enables a walk at $p$ to terminate early by reusing
the completed continuations of an existing skeleton node $q$.
Rather than treating $p$ as $q$, we connect it to the downstream
branches $\mathcal C(q)$, avoiding new walks along these continuations.
This raises three questions: how to reuse samples generated from a
different starting position, when such reuse is sufficiently stable,
and how to reach reusable regions efficiently.

\textit{Reweighting.}
The sampled continuations of $q$ cannot be directly reused at $p$,
since their Brownian exit distributions differ. We account for this
difference through a change of measure on each recorded sampling
sphere. Attachment requires $p\in B_s$ for every $s\in\mathcal S(q)$.
For a branch $j\in\mathcal C_s(q)$, its weight becomes
\begin{equation}
    w_{pj}
    =
    \frac{K_{B_s}(p,x_j)}{K_{B_s}(q,x_j)}w_{qj},
    \label{eq:skeleton-attachment}
\end{equation}
where $K_{B_s}(x,z)$ is the exit density with respect to surface area
on $\partial B_s$ for a walk started at $x$, and $x_j$ is the
position of node $j$. The existing weight $w_{qj}$ includes the
averaging factor and original proposal correction, so the complete
importance factor of the new connection is
\begin{equation}
    \ell_{p,s}(z)
    =
    \frac{K_{B_s}(p,z)}{K_{B_s}(q,z)}
    \frac{K_{B_s}(q,z)}{r_s(z)}
    =
    \frac{K_{B_s}(p,z)}{r_s(z)}.
\end{equation}

\textit{Admissibility and candidate search.} Although reweighting corrects the exit distribution, large importance factors can increase estimator variance. Moreover, checking every completed node for attachment would undermine the efficiency gained from reuse. We address both issues by defining admissible regions that can be spatially indexed. A connection to $q$ is admissible only if $p$ lies inside every recorded sampling sphere and
\begin{equation}
    \Gamma_q(p)
    :=
    \max_{s\in\mathcal S(q)}
    \sup_{z\in\partial B_s}
    \ell_{p,s}(z)
    \le \Lambda_{\mathrm{attach}},
    \label{eq:attachment-admissibility}
\end{equation}
where $\Lambda_{\mathrm{attach}}$ is the prescribed attachment limit. These conditions define a reusable region for each completed node. We spatially index these regions and select the admissible candidate with the smallest $\Gamma_q(p)$. Upon acceptance, $p$ is connected to $\mathcal C(q)$ using Eq.~\ref{eq:skeleton-attachment}, and its walk terminates. If no admissible candidate exists, the walk continues. Since all reused subgraphs are already completed, these connections preserve acyclicity.

\textit{Directional guiding.} Even with efficient candidate search, uniform WoS walks may rarely enter reusable regions, especially those that are difficult to reach or sparsely covered by the skeleton. Guiding steers exploration toward these hard-to-reach regions in the form of a prescribed point or a selected object surface. Specifically, we sample directions from a mixture of the uniform distribution and a von Mises--Fisher distribution directed toward the target on each WoS sphere. In implementation, the step weight is multiplied by $(4\pi Q(\omega))^{-1}$ for the resulting density of the unit direction $\omega$.

\subsection{Graph Evaluation}
\label{sec:graph-evaluation}



Once constructed, the skeleton can be evaluated for new boundary data without tracing additional WoS paths. Since each edge $i\to j$ represents a dependency on the continuation at $j$, terminal values $\hat u_b=g(x_b)$ propagate toward queries in reverse topological order. This follows a standard parallel DAG traversal, where nodes are evaluated once all their dependencies are resolved, as summarized in Algorithm~\ref{alg:parallel-dag-eval}. Each node is evaluated once and each edge contributes one weighted accumulation, yielding $O(|V|+|E|)$ work.

\begin{algorithm}[t]
\caption{Parallel Evaluation of a WoS Skeleton}
\label{alg:parallel-dag-eval}
\KwIn{DAG $\mathcal G=(V,E)$, weights $w_{ij}$, terminal set $\mathcal T$ with values $v_t$}
\KwOut{Values $\hat u_i$ at query nodes}

Initialize $\mathrm{acc}[i]\gets0$ and
$\mathrm{pending}[i]\gets|\mathcal C(i)|$ for all $i\in V$\;
Set $\hat u_t\gets v_t$ for all $t\in\mathcal T$ and
$\mathcal Q\gets\mathcal T$\;

\While{$\mathcal Q$ is not empty}{
    $\mathcal Q_{\mathrm{next}}\gets\varnothing$\;
    \ForEach{$j\in\mathcal Q$ in parallel}{
        \ForEach{incoming edge $i\to j$ in parallel}{
            $\mathrm{atomicAdd}(\mathrm{acc}[i],\,w_{ij}\hat u_j)$\;
            \If{$\mathrm{atomicSub}(\mathrm{pending}[i],1)=1$}{
                $\hat u_i\gets\mathrm{acc}[i]$\;
                insert $i$ into $\mathcal Q_{\mathrm{next}}$ atomically\;
            }
        }
    }
    $\mathcal Q\gets\mathcal Q_{\mathrm{next}}$\;
}
\end{algorithm}

\paragraph{Topology-based scheduling.}
To reduce the scheduling overhead of frontier-based evaluation, particularly along single-dependency chains, we exploit the fixed skeleton topology to precompute an execution schedule. One scheduling variant fuses maximal single-dependency runs and assigns levels only to the remaining scheduled vertices. Let $\bar{\mathcal C}(i)$ denote the first scheduled descendants reached from
$i$ after skipping fused vertices. Set $\ell(i)=0$ if
$\bar{\mathcal C}(i)=\varnothing$, and otherwise
$\ell(i)=1+\max_{j\in\bar{\mathcal C}(i)}\ell(j)$.
Increasing levels respect the dependency order. A worker dispatched at a
scheduled vertex evaluates its owned run of single-dependency predecessors
locally, omitting those vertices from separate frontiers. This variant
reduces intermediate queue operations while preserving the weighted DAG.

\subsection{Progressive Refinement}
\label{sec:progressive-refinement}

Beyond reusing existing paths, the skeleton also enables more targeted allocation of additional samples. Rather than repeatedly launching new walks from query points, we can refine selected internal nodes by adding new continuations and propagating their updated estimates through the DAG. This allows additional sampling effort to improve multiple queries sharing the same downstream dependencies, without resampling their existing paths.

Let $N_i=|\mathcal C(i)|$ count the existing
outgoing contributions of node $i$, including inherited attachment contributions. If $b_i$ new paths are
added, let $\hat X_{i,r}$ be the value returned by the $r$th independent
vanilla WoS continuation from $x_i$, run to boundary termination without
attachment. We locally update the dependencies of node $i$ by the following to ensure each sample weighs equally: 
\begin{equation}
    \hat u_i
    \leftarrow 
    \frac{N_i}{N_i+b_i}\hat u_i
    +
    \frac{1}{N_i+b_i}\sum_{r=1}^{b_i}\hat X_{i,r}.
    \label{eq:refinement-update}
\end{equation}

\paragraph{Selecting refinement nodes.}
For a target query $q$, candidates are the internal nodes reachable along
its outgoing dependencies. Their influences
$a_i=\frac{\partial\hat u_q}{\partial\hat u_i}$ measure how changes in a node value
affect the query and are obtained by forward propagation from $q$ toward
the boundary. To identify which nodes would benefit most from additional sampling, we also need their continuation variances. Rather than estimating these independently at every node, we partition the domain into subdomains and share a vanilla WoS estimate of single-continuation variance among nodes within each subdomain.
Let $\widehat\sigma_i^2$ denote the estimate assigned to node $i$. Using the local variance surrogate ${a_i^2\widehat\sigma_i^2}/{N_i}$, we rank
candidates by the predicted reduction from one additional continuation as Eq.~\eqref{eq:marginal-variance-reduction} and retain up to $K$ highest-scoring candidates as a fixed set $\mathcal S$.
\begin{equation}
    \Delta_i
    =
    a_i^2\widehat\sigma_i^2
    \left(
        \frac{1}{N_i}-\frac{1}{N_i+1}
    \right)
    =
    \frac{a_i^2\widehat\sigma_i^2}{N_i(N_i+1)}.
    \label{eq:marginal-variance-reduction}
\end{equation}

\paragraph{Budget allocation.}
The preceding selection identifies promising refinement nodes, but does not determine how many additional continuations each should receive. Given a total budget $C$, we allocate $b_i$ continuations to each selected node $i\in\mathcal S$ to minimize the estimated residual variance. Extending the local variance surrogate used for candidate selection gives
\begin{equation}
    \min_{\{b_i\ge0\}}
    \sum_{i\in\mathcal S}
    \frac{a_i^2\widehat\sigma_i^2}{N_i+b_i},
    \qquad
    \text{s.t.}\quad
    \sum_{i\in\mathcal S} b_i=C.
    \label{eq:optimal-refinement-allocation}
\end{equation}
The continuous solution is
$b_i=\max(0,t|a_i|\widehat\sigma_i-N_i)$,
where $t$ is chosen to satisfy the budget constraint.
Consequently, we round the allocations to integers while preserving
the total budget $C$.


\begin{algorithm}[t]
\caption{Variance-Guided Skeleton Refinement}
\label{alg:variance-refinement}
\KwIn{Skeleton $\mathcal G$ with current values, query $q$, continuation budget $C$, candidate limit $K$}
\KwOut{Refined skeleton and updated query values}

Compute influences $a_i$ on the internal nodes reachable from $q$\;

Estimate continuation variances $\widehat\sigma_i^2$ using subdomain partitioning\;

Select $\mathcal S\gets\operatorname{TopK}_i\Delta_i$ using
Eq.~\eqref{eq:marginal-variance-reduction}\;

Allocate integer additions $b_i$ by
Eq.~\eqref{eq:optimal-refinement-allocation} and budget-preserving rounding\;

Add $b_i$ vanilla WoS continuations at each selected node,
rescaling existing weights by Eq.~\eqref{eq:refinement-update}\;

Evaluate new continuations and reevaluate affected ancestors\;
\end{algorithm}

\subsection{Source Terms as an Additive DAG Structure}
\label{sec:source-dag}
For the Poisson equation $\Delta u=f$, a WoS step must account for the
source field $f$ inside its sampling ball as well as the continuation on
the ball's surface. With the Dirichlet Green function defined by
$-\Delta G_B=\delta$ with respect to the Dirac distribution $\delta$, a point
$x\in B$ satisfies
\begin{equation}
    u(x)=\int_{\partial B}K_B(x,z)u(z)\,d\sigma(z)
    -\int_B G_B(x,y)f(y)\,dy.
    \label{eq:poisson-ball-representation}
\end{equation}
The two integrals become separate sets of weighted dependencies in the
skeleton.

\paragraph{Source sampling and graph representation.}
For a newly generated step on $B_s=B(c_s,R_s)$, draw a uniform surface exit
$X$ and $L$ independent volumetric samples $Y_k$ with density
$\rho_s(y)={G_{B_s}(c_s,y)}/{Z_s}$, where
$Z_s=\int_{B_s}G_{B_s}(c_s,y)\,dy={R_s^2} /{6}$ in $\mathbb R^3$. The local estimator is
\begin{equation}
    \widehat u(c_s)=\widehat u(X)
    -\frac{R_s^2}{6L}\sum_{k=1}^{L} f(Y_k),
    \label{eq:centered-source-step}
\end{equation}
where $\widehat u(X)$ is the continuation estimate at the sampled exit.
Thus each step adds one continuation and $L$ source leaves with local
weight $-{R_s^2}/{6L}$. Retaining source locations gives a node $q$ the additive estimator
\begin{equation}
    \hat u_q
    =
    \sum_j w^{K}_{qj}\hat u_j
    +
    \sum_k w^{G}_{qk}f(y_k),
    \label{eq:source-augmented-node}
\end{equation}
where $x_j$ and $y_k$ are its continuation and source locations, while weights
$w^K_{qj}$ and $w^G_{qk}$ include the corresponding sampling and averaging factors. Source leaves supply $f(y_k)$ and do not continue walking or serve as spatial attachment targets.

\paragraph{Attachment and reusable evaluation.}
Each source sample retains its original sampling-group record. To reuse a
group of $q$ from another point $p\in B_s$, keep the sampled locations and
proposal fixed, and replace the Green kernel's starting point through
\begin{equation}
    w^{G}_{pk}
    =
    \frac{G_{B_s}(p,y_k)}
         {G_{B_s}(q,y_k)}
    w^{G}_{qk}.
    \label{eq:source-reuse}
\end{equation}
The continuation edges use the Poisson ratio from
Eq.~\ref{eq:skeleton-attachment}, so attachment reuses both surface and
volume contributions without resampling the target group. And the surface
admissibility condition still follows Section~\ref{sec:path-attachment} governs acceptance.
For fixed geometry, the sampled positions and transport weights are independent of the values of $g$ and $f$. Changing either field therefore requires only new terminal values: boundary leaves receive $g(x_b)$ and source leaves receive $f(y_k)$. Algorithm~\ref{alg:parallel-dag-eval} then evaluates the same DAG without new walks.

\subsection{Spatially Aware Graph Compression}
\label{sec:spatial-compression}

A completed skeleton may contain many intermediate states along WoS trajectories, increasing storage and evaluation cost without necessarily improving reuse. Consecutive single-dependency states can be eliminated exactly, but their removal also discards spatial entry points for future attachment and refinement. We therefore compress the graph while retaining selected states that provide useful spatial coverage or serve as important roles in the skeleton such as queries, terminals, branching and merging vertices. Any remaining chain $S=(v_0,\ldots,v_k)$ whose interior vertices have one incoming and one outgoing edge can be replaced by a single edge with weight $w_S=\prod_{r=0}^{k-1}w_{v_rv_{r+1}}$. This contraction preserves the linear map from terminal values to retained nodes, including all query estimates, while reducing the number of stored vertices and edges. However, eliminated states can no longer serve as aforementioned attachment or following refinement locations. We therefore preserve selected spatial anchors, together with their direct sampled-exit children, associated edges, and sampling records, while contracting the remaining eligible chains.



\paragraph{Coverage-preserving anchor selection.}
An anchor set $\mathcal A$ is selected to provide at least $m\ge1$
attachment choices over a target fraction $\rho_{\mathrm{target}}$ of the
domain. With $\mathcal R_v$ denoting the reusable region of candidate $v$
and $\mathcal P$ a fixed set of uniform probes, coverage is measured by
$c_{\mathcal A}(p)=\sum_{v\in\mathcal A}\mathbf1_{p\in\mathcal R_v}$ and
$\rho_m(\mathcal A)={|\{p\in\mathcal P:c_{\mathcal A}(p)\ge m\}|}/{|\mathcal P|}$,
subject to $\rho_m(\mathcal A)\ge\rho_{\mathrm{target}}$.
Selection follows the greedy criterion
$v^\star=\arg\max_{v\notin\mathcal A}\Delta_{\mathcal A}(v)$, where
$\Delta_{\mathcal A}(v)=\sum_{p\in\mathcal P\cap\mathcal R_v}\mathbf1_{c_{\mathcal A}(p)<m}$
favors candidates covering probes with fewer than $m$ anchors.

Since anchors along one trajectory share downstream samples, coverage can be
extended to distinct trajectory origins to reduce correlation in subsequent
attachment. With $b(v)$ identifying the trajectory that created $v$ and
$\mathcal B_{\mathcal A}(p)=\{b(v):v\in\mathcal A,\ p\in\mathcal R_v\}$
denoting the origins available at $p$, the count becomes
$c_{\mathcal A}(p)=|\mathcal B_{\mathcal A}(p)|$ and the greedy score becomes
$\Delta_{\mathcal A}(v)=\sum_{p\in\mathcal P\cap\mathcal R_v}\mathbf1_{c_{\mathcal A}(p)<m \land b(v)\notin\mathcal B_{\mathcal A}(p)}$.

\section{Adaptive Transport Reuse for Interactive Scenes}

A reusable transport skeleton must support more than repeated evaluation on fixed geometry. In interactive scenes, new queries may appear in poorly covered regions, objects may move or be inserted, and users may wish to edit transport between selected regions. Rebuilding the skeleton after every change would discard much of its reusable computation. We therefore develop local graph updates that adapt the retained transport without reconstructing it globally: path refinement expands spatial attachment opportunities, hub-based reuse shares continuations across dense queries, geometry repair updates transitions affected by scene changes, and regional transport editing provides direct control over selected interactions.

\begin{figure}
    \centering
    \includegraphics[width=1\linewidth]{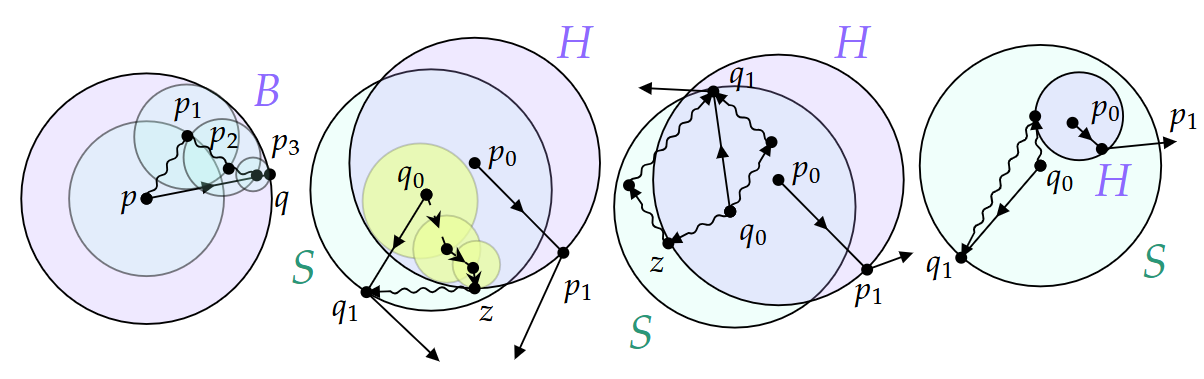}
    \caption{Path refinement and interface construction.}
    \label{fig:path refine and fine-tune}
\end{figure}

\subsection{Path Refinement for New Queries}
\label{sec:path-refinement}

The attachment bound in Section~\ref{sec:path-attachment} limits the region
from which a retained sampling sphere can be reused to avoid variance explosion, while path refinement comes to rescue by exposing intermediate positions within retained WoS steps, providing additional attachment locations for new queries. Path refinement also paves way for both building hub-sphere for intra-skeleton trajectories reuse (Section~\ref{sec:hub-reuse}), and skeleton repairing and tiny-object capturing in dynamic scenarios (Section~\ref{sec:geometry-repair}). 

Consider an ordinary WoS step $p\to q$ with sampling ball $B=B(p,R)$,
where the recorded exit $q$ was sampled uniformly on $\partial B$.
Starting at $p_0=p$, one can generate intermediate positions conditioned on
that exit as Figure~\ref{fig:path refine and fine-tune}. For a local ball $B_k=B(p_k,R_k)\subset B$, the next position
$s\in\partial B_k$ is sampled with surface-area density
\begin{equation}
f_k(s\mid q)
=
\frac{1}{|\partial B_k|}
\frac{K_B(s,q)}{K_B(p_k,q)} .
\label{eq:refinement-conditional-density}
\end{equation}
This Doob $h$-transform, with $h(x)=K_B(x,q)$, recovers Brownian transitions conditioned on exiting $B$ at $q$. We choose $R_k=\bigl(R-\|p_k-p\|\bigr)/2,$ half the remaining distance to the original sphere. This leaves each local ball strictly inside $B$ while advancing the path and exposing intermediate positions. Appendix~\ref{app:refinement-sampler} shows that it requires at most two draws per step on average in rejection sampling for this radius choice.

The original edge $p\to q$ can coexists with the newly generated positions $p_k$, $1\le k\le N-1$, as a node with a unit-weight edge $p_k\to q$. To see why direct connections
suffice, consider a chain $p_1\to\cdots\to p_m\to q$ with unit weights, which makes every node in this chain evaluate to the retained estimate $U_q$ at $q$. In practice, one can replace an ordinary retained edge $p\to q$ of coefficient $w$ by a chain $p\to p_1\to\cdots\to p_m\to q$ uses $w_{p\to p_1}=w/N, w_{p_j\to p_{j+1}}=1,w_{p_m\to q}=1.$ Connecting each node directly to $q$ therefore preserves its value while reducing propagation depth. Notably, though these $p_k$'s can be sampled sequentially, their values can still simply share a direct reference to the same continuation. Such conditional construction ensures that each new node $z$ has the correct continuation law on its own sphere, making it a valid attachment target (Appendix~\ref{app:refinement-proof}).


As a conditioned path approaches $q$, shrinking local balls provide diminishing spatial coverage. We therefore limit each refinement path to $L$ steps or terminate earlier when the sphere clearance falls below a prescribed tolerance, allocating further samples to fresh conditioned paths starting from $p$.

\subsection{Shared Query Evaluation in a Hub}
\label{sec:hub-reuse}

Path attachment and refinement allow individual queries to reuse
existing continuations, but densely distributed queries may still
incur repeated costs in searching for and connecting to the skeleton.
To share these connections, we introduce a hub ball
$H=B(p_0,R_H)\subset\Omega$ containing a group of queries.
Rather than attaching each query separately, we construct a common
pool of interface samples on $\partial H$, each referencing an
existing downstream continuation. Queries inside $H$ then reuse
this pool through the Poisson kernel $K_H$, with only their
query-dependent weights evaluated separately.

These interfaces can be obtained either by conditionally refining retained paths to their crossings with $\partial H$ or by sampling directly on portions of $\partial H$ supported by retained sampling spheres. In the conditional construction (Figure~\ref{fig:path refine and fine-tune}), a retained step $q_0\to q_1$ supported by a ball $S$ is refined within $S\cap H$, conditioned on $q_1$. If the path reaches $\partial H$ before $\partial S$, its first crossing $z\in\partial H$ provides an interface referencing $q_1$ with unit weight, avoiding additional spatial importance ratios. While this construction reuses existing paths without reweighting their reconstructed crossings, direct sampling provides explicit control over the interface distribution, allowing contributions from different sampling groups to be combined across $\partial H$. We therefore develop the direct-sampling estimator below and leave the conditional construction and its probabilistic justification to Appendix~\ref{app:fine-tuning-proof}.


For $L$ retained sampling groups, indexed by $j=1,\ldots,L$, let $S_j$
be the supporting ball of group $j$ and let $U_{Q_j}$ denote the retained
continuation estimate at its exit $Q_j$. Each group contributes over a supported hub patch $\Gamma_j=\partial H\cap S_j$, which covers the entire $\partial H$ if $S_j$ contains $H$. Then independently of its retained exit
$Q_j\sim r_j$, sample $Z\sim g_j(z)={\mathbf1_{\Gamma_j}(z)}/{|\Gamma_j|}$ on $\Gamma_j$,
where $g_j$ is an area density, extended by zero outside the patch. The interface continuation is given by Eq.~\eqref{eq:hub-interface-continuation} where the weight converts the retained proposal density $r_j$ to the Poisson
kernel at $Z$.  
\begin{equation}
\widehat U_j(Z)
=
\frac{K_{S_j}(Z,Q_j)}{r_j(Q_j)}U_{Q_j}.
\label{eq:hub-interface-continuation}
\end{equation}


Since individual sampling groups cover only portions of $\partial H$, their contributions must be combined without double-counting overlapping patches. We therefore sample a group $J=j$ with probability $\beta_j$, followed by an interface $Z\sim g_j$, giving the marginal density $m(z)=\sum_{j=0}^{L}\beta_jg_j(z)$ with $\sum_j\beta_j=1$. As the retained groups may not cover the entire hub surface, we include a full-support fallback group with $g_0(z)=1/|\partial H|$ and $\beta_0>0$, whose continuation $\widehat U_0(Z)$ is generated by vanilla WoS and retained in the skeleton. This ensures $m(z)>0$ throughout $\partial H$, allowing the hub estimator to use $N$ shared interface samples whose unbiasedness is validated by Proposition~\ref{prop:hub-reuse-unbiased}:
\begin{equation}
\widehat u_H(x)
=
\frac1N\sum_{n=1}^{N}
\frac{K_H(x,Z_n)}{m(Z_n)}
\widehat U_{J_n}(Z_n).
\label{eq:hub-mixture-estimator}
\end{equation}

\begin{proposition}[Unbiased hub reuse]
\label{prop:hub-reuse-unbiased}
If the retained downstream estimators are conditionally unbiased and
interface sampling is independent of the retained continuations, then
$\mathbb E[\widehat U_j(Z)\mid Z=z]=u(z)$ and
$\mathbb E[\widehat u_H(x)]=u(x)$ for every $x\in H$.
\end{proposition}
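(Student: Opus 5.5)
The argument is a tower-property computation carried out at two levels: first for a single interface sample conditioned on its location, then for the mixture estimator on $\partial H$.

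\emph{First claim.} Fix a group $j\ge1$ and a point $z\in\Gamma_j$. Since $Z$ is drawn independently of the retained exit $Q_j\sim r_j$, conditioning on $Z=z$ leaves the law of $Q_j$ equal to $r_j$. Conditioning further on $Q_j$ and using that the retained downstream estimator is conditionally unbiased, $\mathbb E[U_{Q_j}\mid Q_j]=u(Q_j)$, we get
\[
\mathbb E[\widehat U_j(Z)\mid Z=z]=\int_{\partial S_j}\frac{K_{S_j}(z,y)}{r_j(y)}\,u(y)\,r_j(y)\,d\sigma(y)=\int_{\partial S_j}K_{S_j}(z,y)\,u(y)\,d\sigma(y).
\]
Because $z\in\Gamma_j\subset S_j$ and $u$ is harmonic in $\Omega\supset \overline{S_j}$, the Poisson representation on $S_j$ (the source-free case of Eq.~\eqref{eq:poisson-ball-representation}) shows this integral equals $u(z)$. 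Two technical points need checking here. The support condition $r_j>0$ wherever $K_{S_j}(z,\cdot)u\neq0$ holds for the uniform or recorded proposals, since $K_{S_j}(z,\cdot)>0$ on $\partial S_j$ for interior $z$. For $z\in\partial S_j\cap\partial H$ on the boundary of the patch, the identity is a limit statement that holds only for $\sigma$-a.e.\ $z$, and this is harmless because $Z$ has a density. For the fallback group $j=0$, $\widehat U_0(z)$ is a vanilla WoS estimate started at $z$, so it is unbiased by the standard WoS argument.

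\emph{Second claim.} Each summand in Eq.~\eqref{eq:hub-mixture-estimator} is identically distributed, so it suffices to treat a single pair $(J,Z)$. Conditioning on $(J,Z)=(j,z)$ and applying the first claim gives $\mathbb E[\,\cdot\mid J,Z]=K_H(x,Z)\,u(Z)/m(Z)$. This relies on the interface sampling being independent of the retained continuations, so that the choice of $J$ and $Z$ does not bias $U_{Q_J}$. Averaging over $J$ and then $Z$, the joint density of $(J,Z)$ is $\beta_j g_j(z)$, so
\[
\mathbb E\Big[\frac{K_H(x,Z)u(Z)}{m(Z)}\Big]=\int_{\partial H}\sum_{j}\beta_jg_j(z)\,\frac{K_H(x,z)u(z)}{m(z)}\,d\sigma(z)=\int_{\partial H}K_H(x,z)\,u(z)\,d\sigma(z).
\]
The last step uses $m>0$ on all of $\partial H$, which the fallback term $\beta_0/|\partial H|$ guarantees. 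Finally, $H\subset\Omega$ and the Poisson representation on $H$ give $u(x)$.

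\emph{Main obstacle.} The delicate point is justifying the conditioning, since the same retained continuation $U_{Q_j}$ may be reused by several of the $N$ interface samples, and the $Q_j$ were produced by the earlier skeleton construction rather than fresh sampling. I would handle this by noting that unbiasedness needs only linearity of expectation term by term, not independence between the $n$ samples. Correlation affects variance only. I would state explicitly that each term is unbiased given the independence hypothesis.

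Integrability also needs to be addressed: $K_{S_j}/r_j$ and $K_H/m$ must be bounded so that Fubini applies. This holds for $x$ in the interior of $H$ and $z$ in the interior of $S_j$, and I would record it as the mild assumption under which the proposition holds.
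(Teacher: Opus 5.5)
Your proposal is correct and follows the paper's proof essentially step for step. The ratio $K_{S_j}/r_j$ undoes the retained proposal to recover the Poisson representation on $S_j$. The mixture density $m$, kept positive by the fallback group, cancels $\sum_j\beta_j g_j$. Linearity of expectation then handles the $N$ correlated interface samples. The paper additionally records the single-group patch identity and checks that the graph weights reproduce the estimator.

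Your added regularity remarks go beyond the paper and are harmless, with two small corrections. Since WoS balls usually touch $\partial\Omega$, the Poisson representation on $S_j$ should rest on continuity of $u$ up to $\partial\Omega$ rather than on $\overline{S_j}\subset\Omega$. Integrability is most simply obtained by Tonelli from $\int_{\partial S_j}K_{S_j}(z,\cdot)\,d\sigma=1$ together with a bound on $\mathbb E|U_{Q_j}|$, not from boundedness of $K_{S_j}/r_j$, which is not uniform as $\Gamma_j$ approaches $\partial S_j$.
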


\begin{proof}[Proof sketch]
The ratio in Eq.~\eqref{eq:hub-interface-continuation} recovers the Poisson
representation at $Z$, while the mixture density $m$ cancels the joint
sampling density over $\partial H$.
See Appendix~\ref{app:hub-reuse-proof}.
\end{proof}




To represent the hub estimator in the skeleton, sampling groups and mixture probabilities are chosen independently of the retained exit and continuation samples. Each interface $Z_n$ generated from group $j$ introduces edges with weights
\begin{equation}
w_{x\to Z_n}=\frac{K_H(x,Z_n)}{N\,m(Z_n)},
\qquad
w_{Z_n\to Q_j}=\frac{K_{S_j}(Z_n,Q_j)}{r_j(Q_j)}.
\end{equation}
The first weight accounts for hub sampling and averaging, while the second reweights the retained continuation. Interface positions, mixture densities, and continuation references are stored once per hub, with queries evaluating their own weights against this shared pool. For $M$ queries and $N$ interfaces, this reduces explicit connection storage from $O(MN)$ to $O(M+N)$, although exact evaluation still requires $O(MN)$ kernel contributions.

\subsection{Transport Adaptation for Dynamic Objects}
\label{sec:geometry-repair}

\begin{figure}[t]
    \centering
    \includegraphics[width=1.0\linewidth]{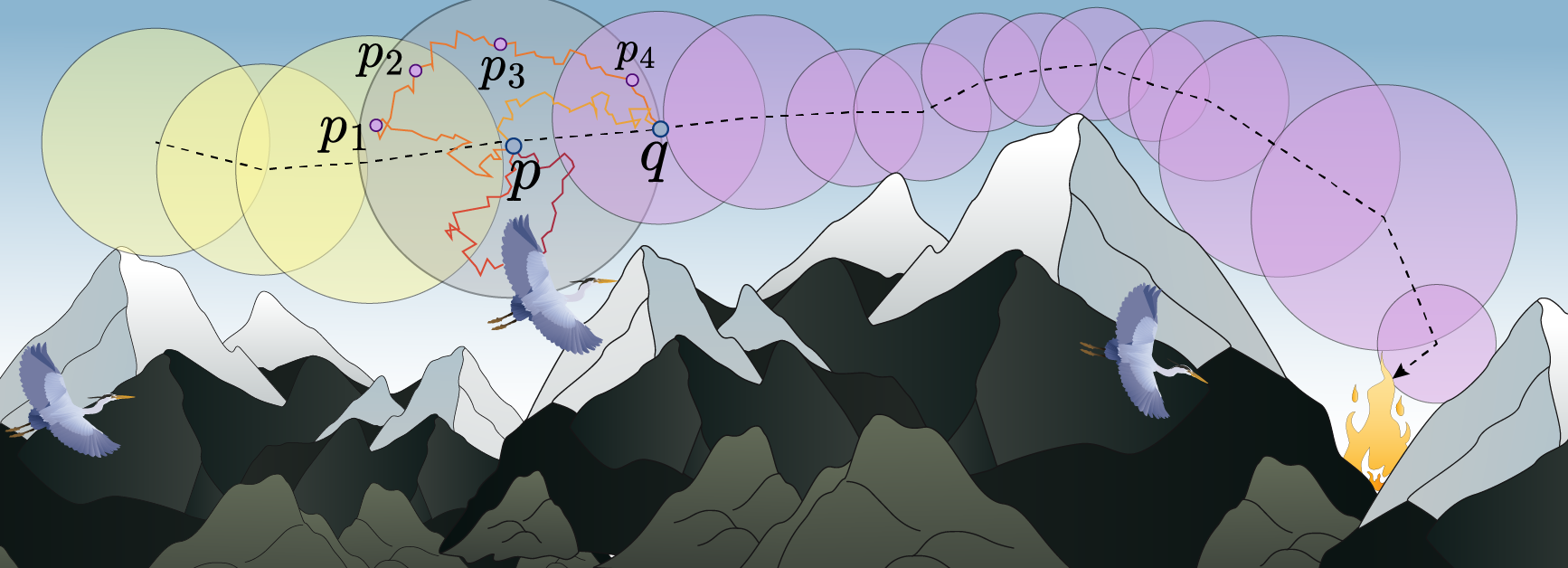}
    \caption{Conditional path refinement and geometry repair. Purple highlights retained continuations, while gray marks a WoS sampling sphere intersected by a moving object. Resampling from scratch in complex scenes may lose previously discovered rare boundary contributions, causing large sampling errors. Instead, conditional refinement preserves existing transport by reconnecting paths that avoid the object to $q$, while redirecting collisions to the new boundary.}
    \label{fig:dynamic_obj}
\end{figure}

As Figure~\ref{fig:dynamic_obj}, dynamic geometry changes can invalidate previously sampled WoS transitions, but their downstream continuations may remain reusable. When a new object intersects a retained sampling ball, the recorded exit alone cannot determine whether the underlying Brownian path would have encountered the object first. Rather than discarding the entire continuation, we use conditional path refinement to reconstruct the transition and redirect it to the new boundary only when necessary. This enables local adaptation of the skeleton while preserving unaffected transport. We consider an object inserted into the domain, with the remaining geometry and boundary data fixed.



We spatially index the sampling balls to identify transitions affected by the inserted object. For an affected edge $p\to q$ supported by a ball $S$, we reconstruct its path conditioned on the recorded exit $q$, using Eq.~\ref{eq:refinement-conditional-density}, with local sampling balls contained in $S$ and outside the inserted object. If the reconstructed path hits the object, the edge is redirected to the new boundary; otherwise, it reconnects to $q$ upon approaching $\partial S$ within a prescribed tolerance, provided $q$ remains outside the object. In both cases, the original edge coefficient is preserved, since the conditional sampling law accounts for the retained exit. To preserve shared transport, repair follows the affected dependencies from queries toward their continuations, replacing only invalidated references, while unaffected downstream subgraphs are retained and only those modified nodes with their ancestors are reevaluated.  This confines both reconstruction and reevaluation to the affected transport.



Conditional repair may suffer from high variance when small inserted objects with significant boundary contributions are rarely hit. Our framework is also compatible with bidirectional WoS (Bi-WoS)~\cite{qi2022bidirectional}, whose reverse Green-function estimates can provide shared local corrections to affected sampling balls, accounting for both retained continuations and new boundary contributions. The integration and stopping criteria are detailed in Appendix~\ref{app:conditional-repair}.

\subsection{Application: Regional Transport Editing}
\label{sec:low-rank transport editing}

The retained skeleton exposes transport dependencies between spatial regions, enabling artist-controlled amplification or attenuation of selected interactions. Rather than rebuilding the solver, we approximate the requested region-to-region transport as a low-rank patch applied to the original DAG.

Eliminating internal states yields an effective transport operator on retained anchors. For retained states $A$ and eliminated states $I$, $\hat{\mathbf u}_I=W_{II}\hat{\mathbf u}_I+W_{IA}\hat{\mathbf u}_A$, giving  $\mathcal T=W_{AA}+W_{AI}(I-W_{II})^{-1}W_{IA}.$  Acyclicity allows $\mathcal T$ and $\mathcal T^{\mathsf T}$ to be applied through graph traversals without explicitly forming either matrix.

For selected source and target clusters $\sigma$ and $\tau$, let $P_\sigma$ inject source values and $R_\tau$ restrict target values, defining $T_{\tau\leftarrow\sigma}=R_\tau\mathcal T P_\sigma$. We approximate this block using randomized range finding with target rank $r$ and $p$ oversampling vectors. Draw $\Omega\in\mathbb R^{|\sigma|\times(r+p)}$ and compute $Y=T_{\tau\leftarrow\sigma}\Omega$, $Q=\operatorname{orth}(Y)$ by forward traversal, followed by $Z=T_{\tau\leftarrow\sigma}^{\mathsf T}Q$ by reverse traversal. This gives $ T_{\tau\leftarrow\sigma}\approx QZ^{\mathsf T}.$ Only $r+p$ batched forward probes and corresponding reverse propagation are needed, and forward probes can be shared across target clusters with the same source.

To scale the selected interaction by $\gamma$, a low-rank correction is applied by $    \hat{\mathbf u}_\tau
    \leftarrow
    \hat{\mathbf u}_\tau
    +(\gamma-1)QZ^{\mathsf T}\hat{\mathbf u}_\sigma.$ The original solver is recovered at $\gamma=1$; other values deliberately modify the represented transport rather than the underlying PDE. However, the source and target clusters may interleave in topological order, so the correction must exclude backward dependencies to preserve the DAG. Let $t(v)$ satisfy $t(i)<t(j)$ for every original edge $i\to j$, and define $C_{ij}=\mathbf1_{t(i)<t(j)}$ for $i\in\tau$, $j\in\sigma$. The causal correction is $\Delta T_{\tau\leftarrow\sigma}=(\gamma-1)C\odot(QZ^{\mathsf T})$, which can be evaluated without forming the mask. During reverse-topological traversal, maintain $\mathbf s_i=\sum_{j\in\sigma,\ t(j)>t(i)}Z_j\hat u_j$, giving $\Delta\hat u_i=(\gamma-1)Q_i^{\mathsf T}\mathbf s_i$ for $i\in\tau$. When crossing source node $j$, update $\mathbf s\leftarrow\mathbf s+Z_j\hat u_j$.

\begin{remark}[Edit Complexity]
Each causal patch costs $O(r(|\sigma|+|\tau|))$ per evaluation, without dense blocks or changes to the DAG. Regional edits can be applied as separate low-rank patches.
\end{remark}

\section{Experiments and Results}

\subsection{Experimental Setup}
\label{sec:experimental-setup}

We evaluate accuracy, solve time, and the amortization of construction
cost, together with refinement, compression, and interactive reuse.
Unless stated otherwise, experiments use the multi-object mesh domain shown in
Figure~\ref{fig:scene-rocks} : irregular obstacles lie inside an enclosing
sphere, with spatially varying scalar Dirichlet data on their surfaces.
Specialized domains are introduced in the corresponding experiments.

\begin{figure}[!htbp]
  \centering
  \includegraphics[width=0.7\linewidth]{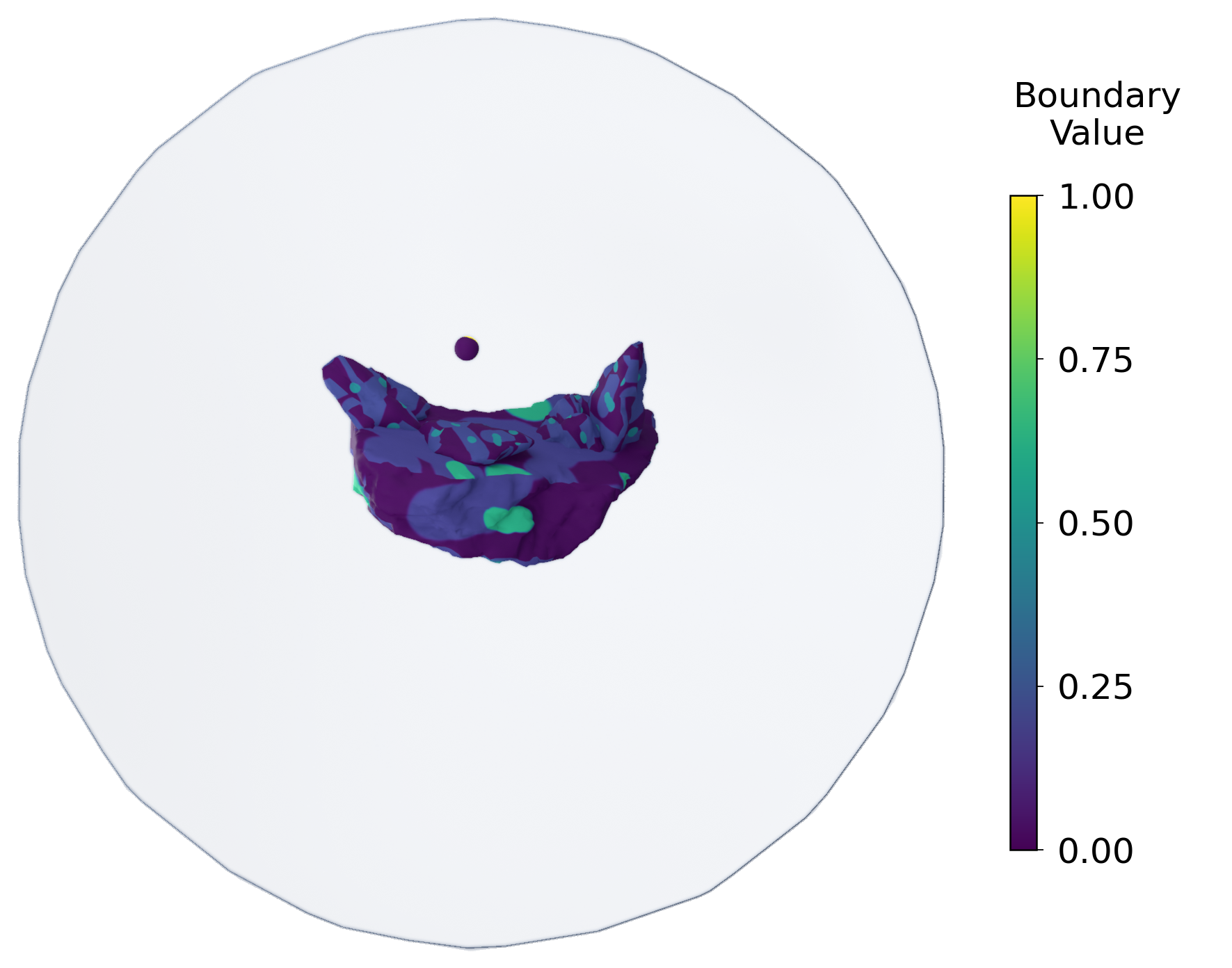}
  \caption{Geometry of the common multi-object test domain, viewed inside a enclosing sphere.}
  \Description{A naturally lit view shows irregular obstacles within a spherical enclosure.}
  \label{fig:scene-rocks}
\end{figure}

A typical construction procedure begins by tracing independent WoS trajectories from a set of initial queries to form an initial skeleton. We then introduce additional queries, whose walks reuse completed continuations through path attachment, producing an attached skeleton. For comparison, an independent-path skeleton traces independent WoS trajectories from all queries without attachment.

Accuracy is measured by
$\mathrm{MSE}=\frac{1}{M}\sum_{i=1}^{M}
(\widehat u(x_i)-u_{\mathrm{ref}}(x_i))^2$,
where $M$ is the total number of queries, $x_i$ is a query position,
$\widehat u$ is the estimated solution, and $u_{\mathrm{ref}}$ is an
independent Monte Carlo reference using $10^5$ WoS
trajectories.  Each configuration was repeated for 5-10 runs, and the mean value was reported. 

All timed methods run serially on the CPU. Skeleton solve time measures
weighted DAG evaluation (Section~\ref{sec:graph-evaluation}).Construction
and solver preparation are included only in the amortization experiment;
dynamic-update timings include repair and reevaluation.  The attachment limit in Eq.~\ref{eq:attachment-admissibility} is set to 4 unless varied explicitly, as recommended by the accuracy--efficiency trade-off observed in Section~\ref{sec:attachment-threshold-experiment}.

\subsection{Skeleton Accuracy and Solve Time}
\label{sec:cache-accuracy-experiment}

Can reuse support a more accurate estimate with a shorter solve time? We compare
an independent-path skeleton, an attached
skeleton that reuses an initial skeleton through path attachment, and
Vanilla WoS. Varying their trajectories per query traces the
accuracy--solve-time curves in Figure~\ref{fig:cache-accuracy}.

\begin{figure}[!htbp]
  \centering
  \captionsetup[subfigure]{font=small,skip=2pt}
  \begin{subfigure}[t]{0.49\linewidth}
    \centering
    \includegraphics[width=\linewidth]{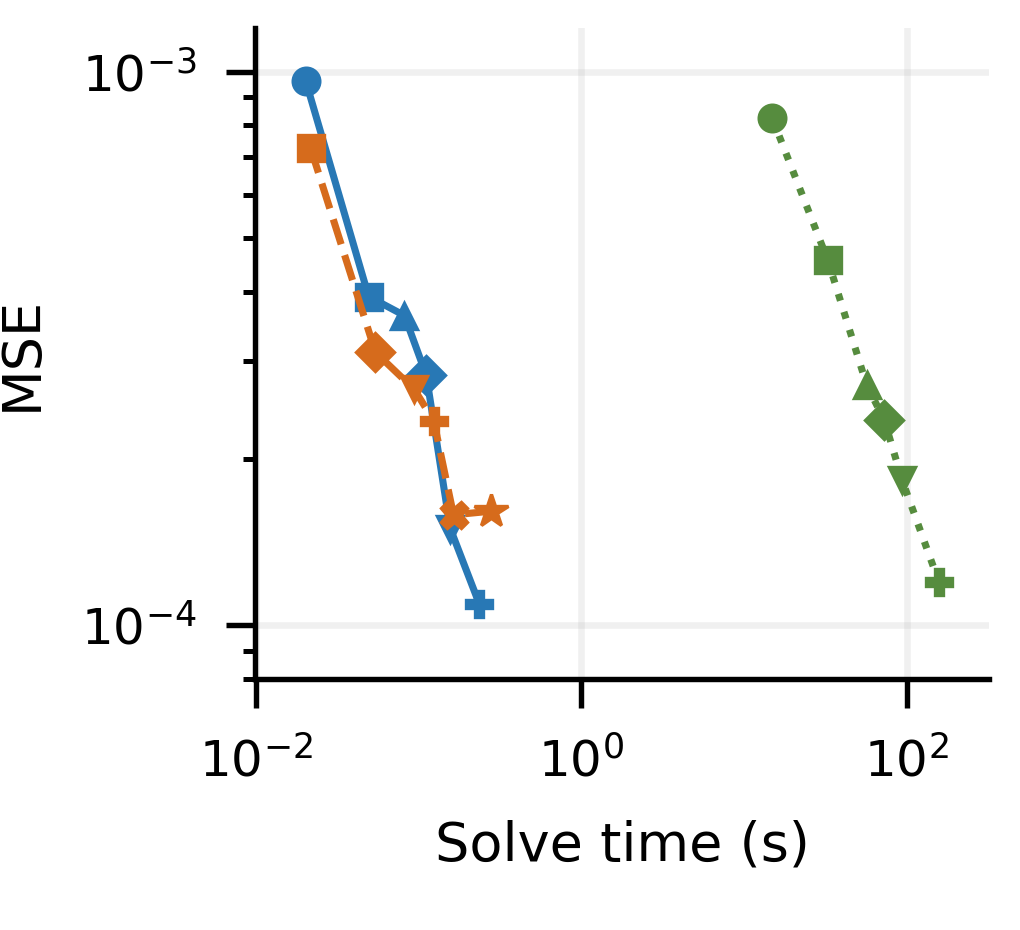}
    \caption{Skeletons and Vanilla WoS}
    \label{fig:cache-accuracy-all}
  \end{subfigure}\hfill
  \begin{subfigure}[t]{0.49\linewidth}
    \centering
    \includegraphics[width=\linewidth]{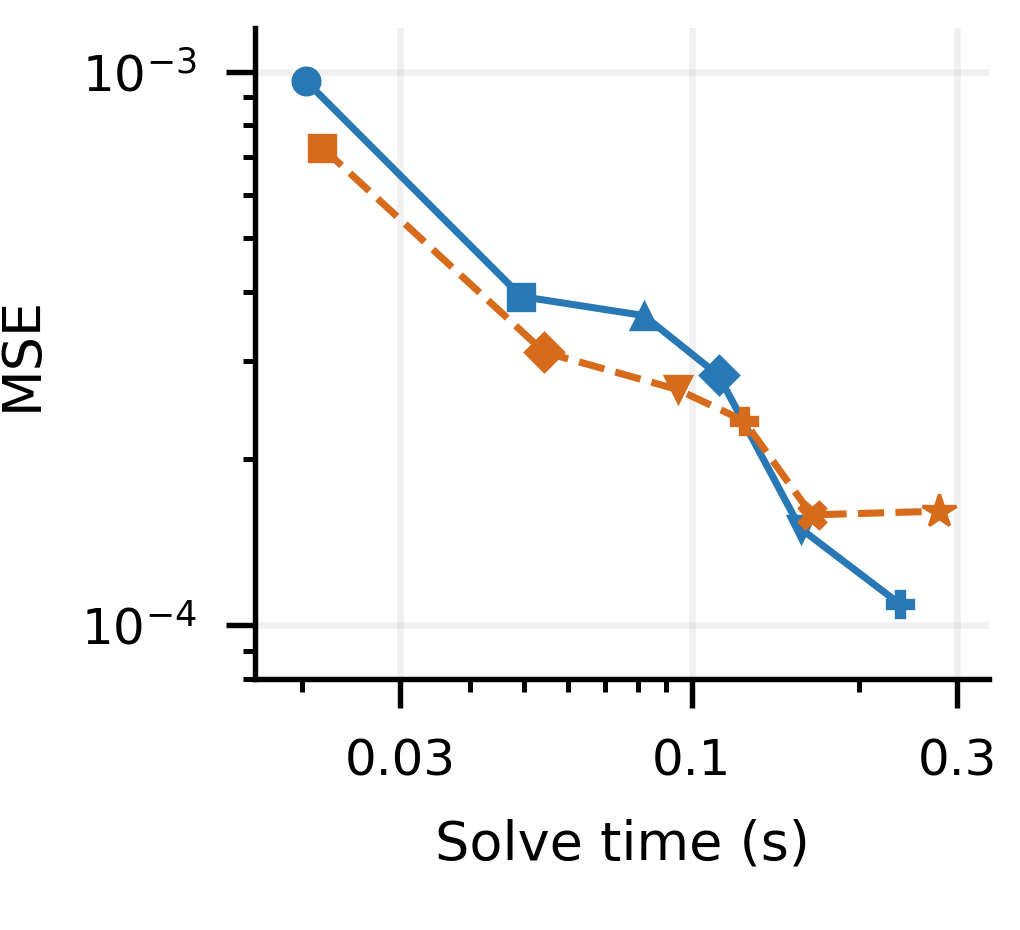}
    \caption{Skeleton detail}
    \label{fig:cache-accuracy-detail}
  \end{subfigure}
  \par\smallskip
  \includegraphics[width=\linewidth]{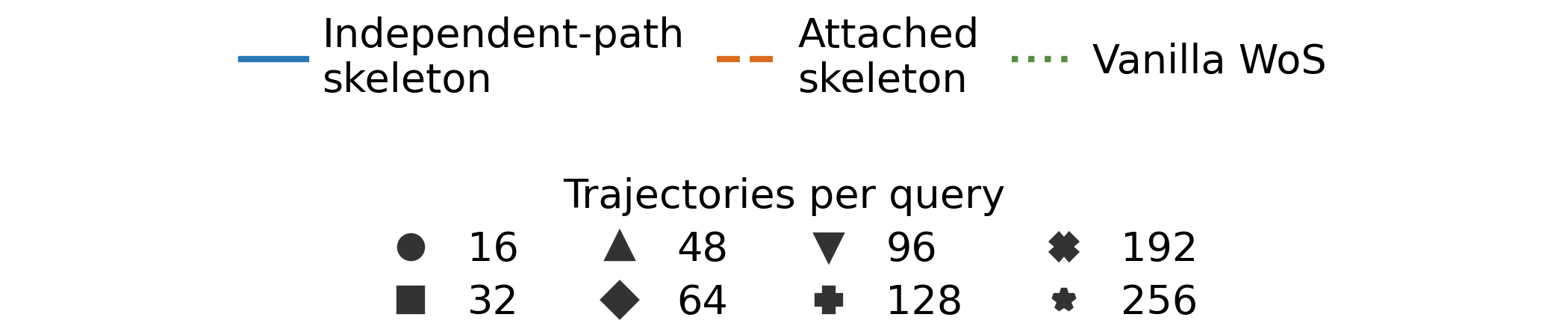}
  \caption{Skeleton evaluation versus Vanilla WoS, averaged over random
  seeds at fixed queries.
  Color identifies the method and marker shape gives the number of
  trajectories per query; for the attached skeleton this budget applies to
  both construction stages. Both axes are logarithmic. Construction and
  solver preparation costs are excluded.}
  \Description{Three accuracy--solve-time curves. Both skeletons solve far
  faster than Vanilla WoS. The attached skeleton initially decreases in error but
  does not retain an accuracy advantage at the largest budgets.}
  \label{fig:cache-accuracy}
\end{figure}

Both skeletons have substantially lower solve time than Vanilla WoS.
For example, the attached skeleton with 64 trajectories per
query reduces MSE by 62\% while solving about $275\times$ faster than
16-trajectory Vanilla WoS. Reusing the skeleton removes repeated geometric
traversal from the solve, allowing a larger construction budget to improve
accuracy at low solve time.

The skeleton detail separates this benefit from attachment's contribution.
At moderate budgets, attachment offers a better accuracy--solve-time trade-off
than the independent-path skeleton: its 96-trajectory configuration has both
lower error and lower solve time than the independent-path 64-trajectory
skeleton. At larger budgets the attached curve flattens while the
independent-path skeleton
continues to improve. This behavior is consistent with shared continuations
limiting the benefit of additional trajectories.

\subsection{Construction-Cost Amortization}
\label{sec:cache-amortization-experiment}

To measure when construction pays for itself, we build an independent-path
skeleton, then solve successively
for different boundary field. Cumulative skeleton time includes construction
and solver preparation once, followed by skeleton evaluation for each field; Vanilla
WoS generates fresh trajectories for each solve. Figure~\ref{fig:cache-amortization}
compares the mean across skeleton builds with one measured Vanilla WoS
solve sequence.

\begin{figure}[!htbp]
  \centering
  \includegraphics[width=\linewidth]{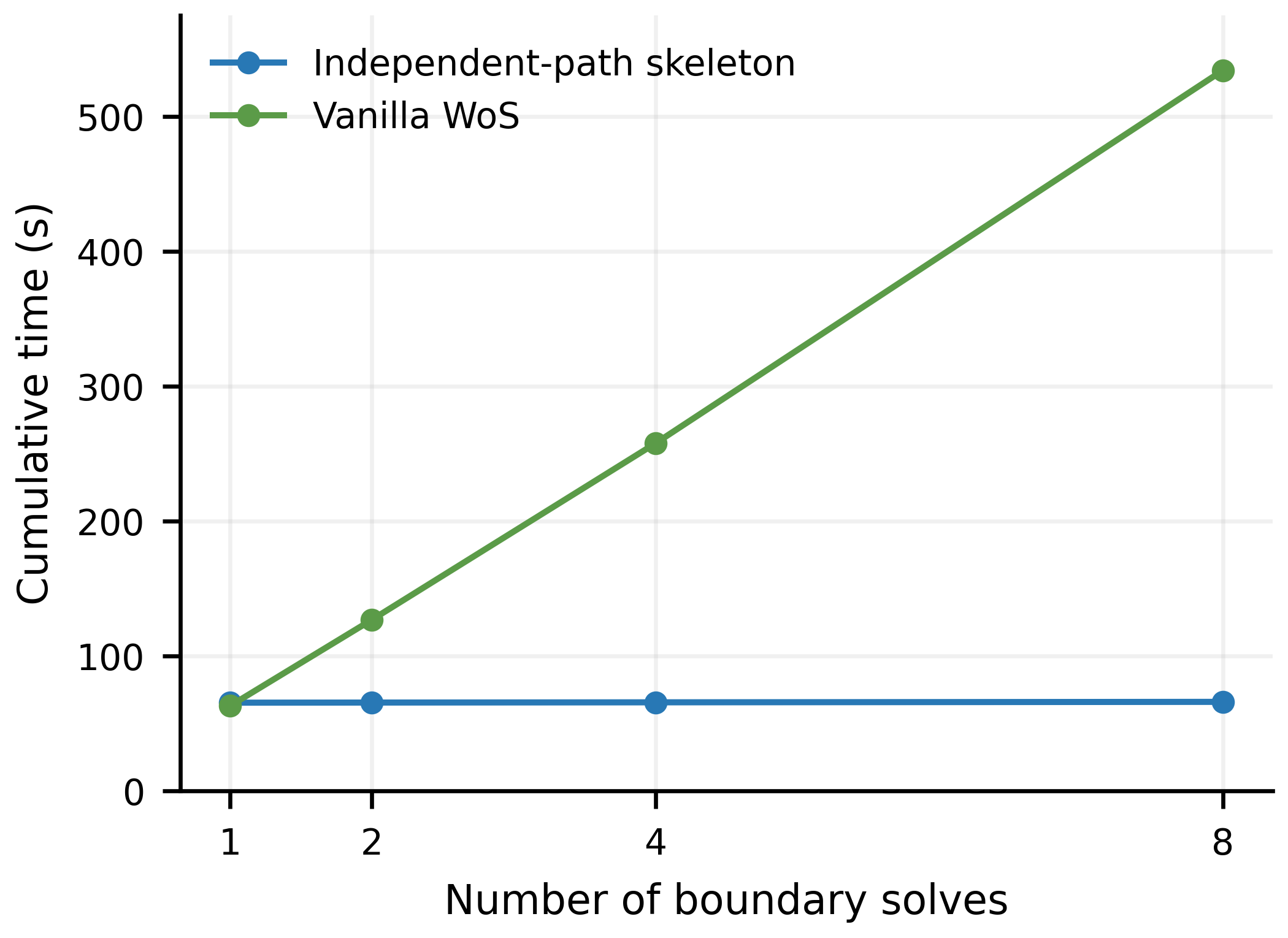}
  \caption{Cumulative cost of repeated boundary solves with 64 trajectories
  per query. Skeleton time includes construction and solver preparation and
  is averaged over random seeds. Vanilla WoS generates new trajectories for each
  boundary field.}
  \Description{Skeleton cumulative time stays near 66 seconds, while Vanilla WoS
  rises from 63 seconds at one solve to 534 seconds at eight solves.}
  \label{fig:cache-amortization}
\end{figure}

Construction costs roughly one Vanilla WoS solve, while subsequent skeleton
evaluations add little to the total. The skeleton amortizes its initial cost
by the second solve and is $8.1\times$ faster cumulatively at eight solves.
Thus, even without attachment, retaining transport separates the cost of
changing boundary data from the cost of exploring the geometry.

\subsection{Sample Budget Allocation}
\label{sec:sample-allocation-experiment}

To investigate how to allocate sampling budget between initial and additional queries to minimize their combined estimation error, 
we divide a fixed trajectory budget between equal-sized batches of initial
and additional queries, varying the fraction allocated to attachment.
Figure~\ref{fig:sample-allocation} reports mean error across matched trials.

\begin{figure}[!htbp]
  \centering
  \includegraphics[width=\linewidth]{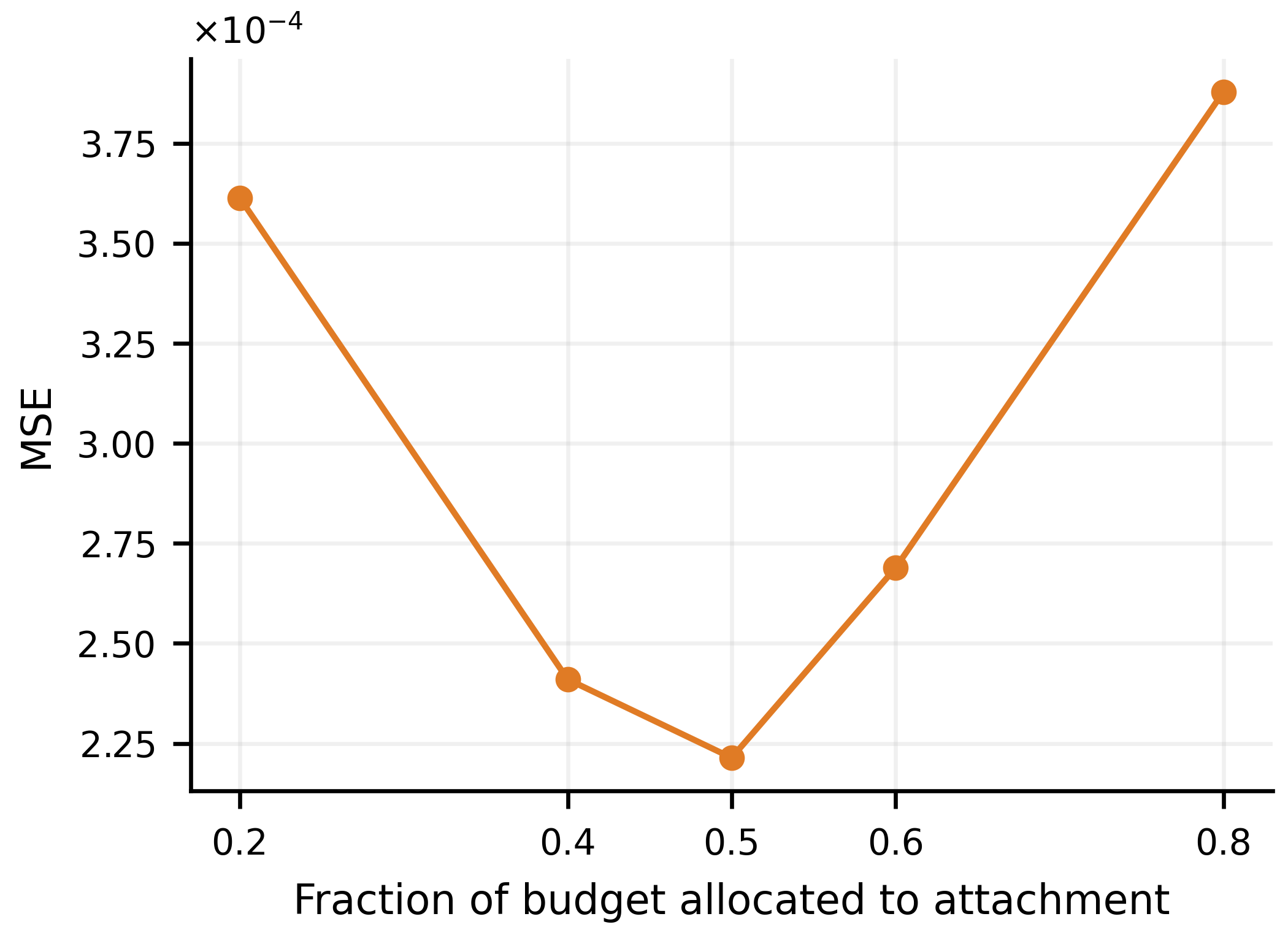}
  \caption{Mean MSE over all initial and additional queries versus the fraction
  of the trajectory budget allocated to attachment. The balanced allocation gives the smallest mean error.}
  \Description{Mean MSE decreases from 3.61 times ten to the minus four at
  fraction 0.2 to 2.21 times ten to the minus four at 0.5, then rises to
  3.88 times ten to the minus four at fraction 0.8.}
  \label{fig:sample-allocation}
\end{figure}

The error curve favors a balanced allocation. Spending too little on initial
construction weakens the transport available for reuse, while spending too
little on attachment undersamples the additional queries. 

\subsection{Attachment Limit Sensitivity}
\label{sec:attachment-threshold-experiment}

In Eq.~\ref{eq:attachment-admissibility}, a smaller attachment limit favors higher accuracy at the cost of reduced attachment efficiency.  To explore this trade-off, we vary this limit while holding the initial skeleton fixed. Table~\ref{tab:attachment-threshold}
reports MSE, the fraction of additional-query walks ending by attachment,
and the mean WoS step count over all additional-query walks.

\begin{table}[!htbp]
  \centering
  \caption{Attachment-limit sensitivity, averaged over random seeds;
  MSE covers both query batches.}
  \label{tab:attachment-threshold}
  \begin{tabular}{rrrr}
    \toprule
    Limit & MSE ($10^{-4}$) & Success (\%) & \shortstack{Mean WoS\\steps} \\
    \midrule
    1    & 2.35 &  0.0 & 34.06 \\
    1.25 & 2.54 & 26.3 & 24.94 \\
    1.5  & 2.91 & 60.0 & 13.52 \\
    2    & 3.03 & 80.9 &  6.55 \\
    4    & 3.62 & 93.0 &  2.77 \\
    8    & 5.76 & 95.8 &  1.99 \\
    16   & 5.42 & 97.1 &  1.67 \\
    \bottomrule
  \end{tabular}
\end{table}

Relaxing the limit initially removes most exploration work, at the cost
of increased error. Beyond 4, attachment is already frequent: further
relaxation buys little additional reuse while allowing larger importance
factors and higher MSE. We therefore use 4 as the default compromise
between accuracy and exploration cost.

\subsection{Poisson Problems with Source Terms}
\label{sec:source-accuracy-experiment}

We next test whether the additive source structure in
Section~\ref{sec:source-dag} retains the accuracy--solve-time benefit for
Poisson problems. Both methods use two source samples per WoS step.

\begin{figure}[!htbp]
  \centering
  \includegraphics[width=\linewidth]{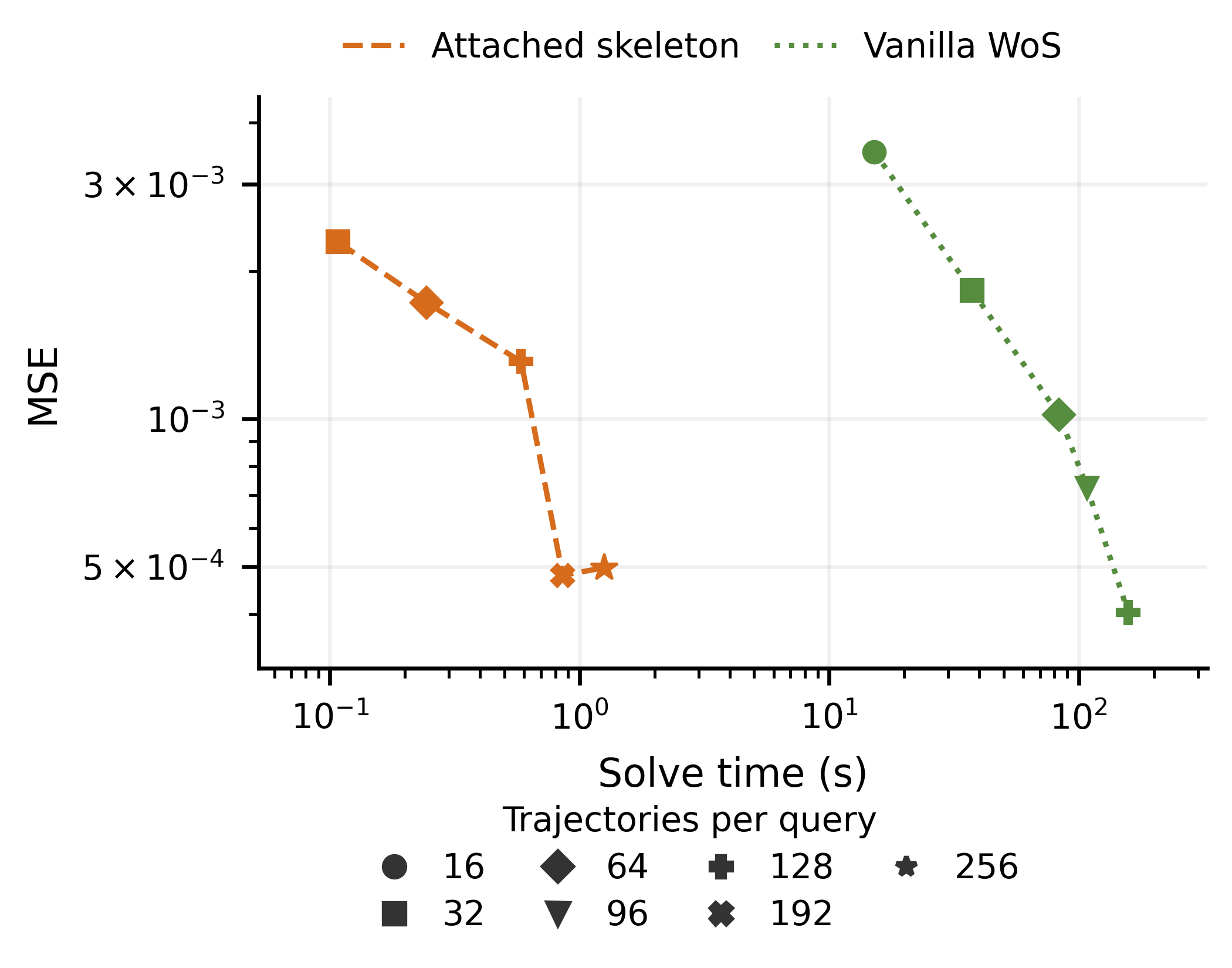}
  \caption{Accuracy and solve time with source terms, averaged over
  random seeds. Marker shape gives trajectories per query, applied to both
  skeleton construction stages. Both axes are logarithmic; construction and
  solver preparation costs are excluded.}
  \Description{Accuracy--solve-time curves for the attached skeleton and
  Vanilla WoS with source terms. The skeleton solves substantially faster;
  its error decreases across most budgets and levels off at the largest
  tested budget.}
  \label{fig:source-accuracy}
\end{figure}

Figure~\ref{fig:source-accuracy} shows that the solve-time advantage
extends to source terms. With 192 trajectories per query, the attached
skeleton achieves 34\% lower MSE than 96-trajectory Vanilla WoS while
solving about $126\times$ faster. Storing volume contributions alongside
boundary transport avoids repeating geometry traversal and source sampling
during evaluation. Increasing the skeleton budget from 192 to 256 does
not further reduce mean error, whereas Vanilla WoS continues to improve.
Thus, additional construction trajectories offer diminishing accuracy
gains in this range despite the low evaluation cost.

\subsection{Variance-Guided Query Refinement}
\label{sec:query-refinement-experiment}

Query refinement can be useful in a domain with localized regions of
high variance, for example, surrounding cups with rapidly varying boundary data
on a few of their interiors, as shown in Figure~\ref{fig:scene-refinement}. Holding the base skeleton fixed, we compare our method with adding the continuation
directly at the queries with the same budgets. Table~\ref{tab:query-refinement} reports the
average error before and after these updates.

\begin{figure}[!htbp]
  \centering
  \includegraphics[width=\linewidth]{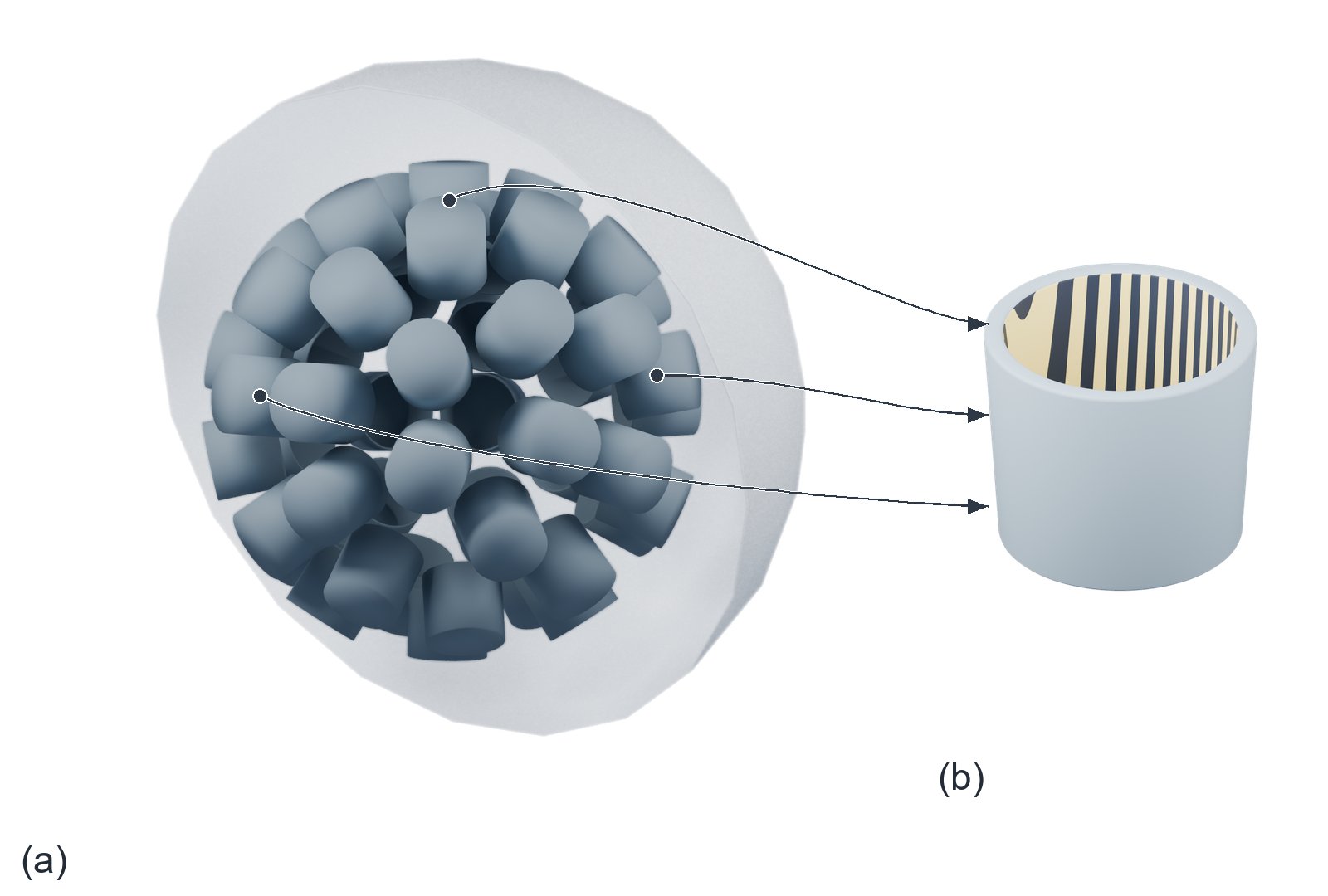}
  \caption{Localized-variance test domain: (a) the complete cup assembly
  inside a enclosure; (b) a cup with high-variance interior. }
  \Description{Naturally lit cups form a spherical assembly within a half-open
  enclosing shell. Arrows point from several cups to a separate schematic
  cup with alternating stripes on its interior.}
  \label{fig:scene-refinement}
\end{figure}

\begin{table}[!htbp]
  \centering
  \caption{Query-averaged MSE for one fixed skeleton. Both updates add
  64 continuations per query.}
  \label{tab:query-refinement}
  \begin{tabular}{lr}
    \toprule
    Method & MSE ($10^{-4}$) \\
    \midrule
    Before refinement & 2.633 \\
    Additional WoS at queries & 1.098 \\
    Variance-guided refinement & 0.637 \\
    \bottomrule
  \end{tabular}
\end{table}

Refinement reduces mean error by 75.8\% from the initial estimate and by
42.0\% relative to adding WoS samples directly at the queries. Targeting
high-variance internal contributions with strong query influence concentrates
the new samples where they can most improve the estimate. Moreover, improving
a shared continuation can benefit several queries, rather than only the query
that receives additional samples. This makes local refinement useful when
uncertainty is spatially concentrated within an otherwise reusable skeleton.

\subsection{Coverage-Preserving Compression}
\label{sec:compression-access-experiment}

To evaluate whether coverage preservation maintains attachment opportunities for new queries, we compare coverage-preserving compression with path-chain elimination on the same skeleton. The required multiplicity in Section~\ref{sec:spatial-compression} is set to $m=1$.

After compression, both graphs add a same batch of queries and perform path attachment with the same sampling budgets. Then we measure the
fraction of all graph nodes removed, attachment success rate, and
mean WoS steps over all new trajectories. Table~\ref{tab:compression-access}
summarizes the trade-off between node reduction and subsequent attachment.

\begin{table}[!htbp]
  \centering
  \caption{Compression and subsequent reuse. Means over matched
  random seeds; step counts are averaged over all new trajectories.}
  \label{tab:compression-access}
  \begin{tabular}{lrr}
    \toprule
    Metric & \shortstack{Path-chain\\elimination} & \shortstack{Coverage-preserving\\compression} \\
    \midrule
    Nodes removed (\%) & 95.53 & 94.77 \\
    Success (\%) & 54.3 & 63.9 \\
    Mean WoS steps & 15.61 & 12.23 \\
    \bottomrule
  \end{tabular}
\end{table}

Preserving access costs only 0.76\% of the original node count, yet reduces
subsequent exploration by 21.7\%. Aggressively contracting every eligible
chain removes few additional nodes at the expense of useful spatial access points.
Spatial anchors therefore preserve much of the compression benefit while
making the retained transport easier for new queries to reach.

\subsection{Spatial Coverage from Path Refinement}
\label{sec:path-refinement-coverage-experiment}

To evaluate how path refinement in Section~\ref{sec:path-refinement} expands the region satisfying the attachment bound, we apply it to a single sampling ball and vary the number of refinement paths. Coverage is measured as the fraction of the ball’s volume satisfying the bound.

Within a retained sampling ball, path refinement can substantially expands the
attachable region, as shown in
Table~\ref{tab:path-refinement-coverage}. 

With eight refinement paths, coverage increases from 6.56\% to 34.69\%,
reaching approximately 5.3 times the original coverage. With 32 paths,
coverage reaches 48.15\%, approximately 7.3 times the original coverage.

\begin{table}[!htbp]
  \centering
  \caption{Attachable volume within a sampling ball, averaged over random
  seeds and measured with uniform interior probes. Each refinement path
  has at most 32 steps, with zero paths representing the original coverage.}
  \label{tab:path-refinement-coverage}
  \footnotesize
  \setlength{\tabcolsep}{2pt}
  \renewcommand{\arraystretch}{0.95}
  \resizebox{\columnwidth}{!}{%
    \begin{tabular}{@{}l*{12}{r}@{}}
      \toprule
      Refinement paths & 0 & 1 & 2 & 4 & 8 & 16 & 32 & 64 & 128 & 256 & 512 & 1024 \\
      \midrule
      Coverage (\%) & 6.56 & 13.67 & 20.30 & 27.36 & 34.69 & 40.72 & 48.15 & 53.68 & 59.52 & 65.14 & 70.22 & 75.18 \\
      \bottomrule
    \end{tabular}%
  }
\end{table}

\begin{figure}[!htbp]
  \centering
  \includegraphics[width=0.78\linewidth]{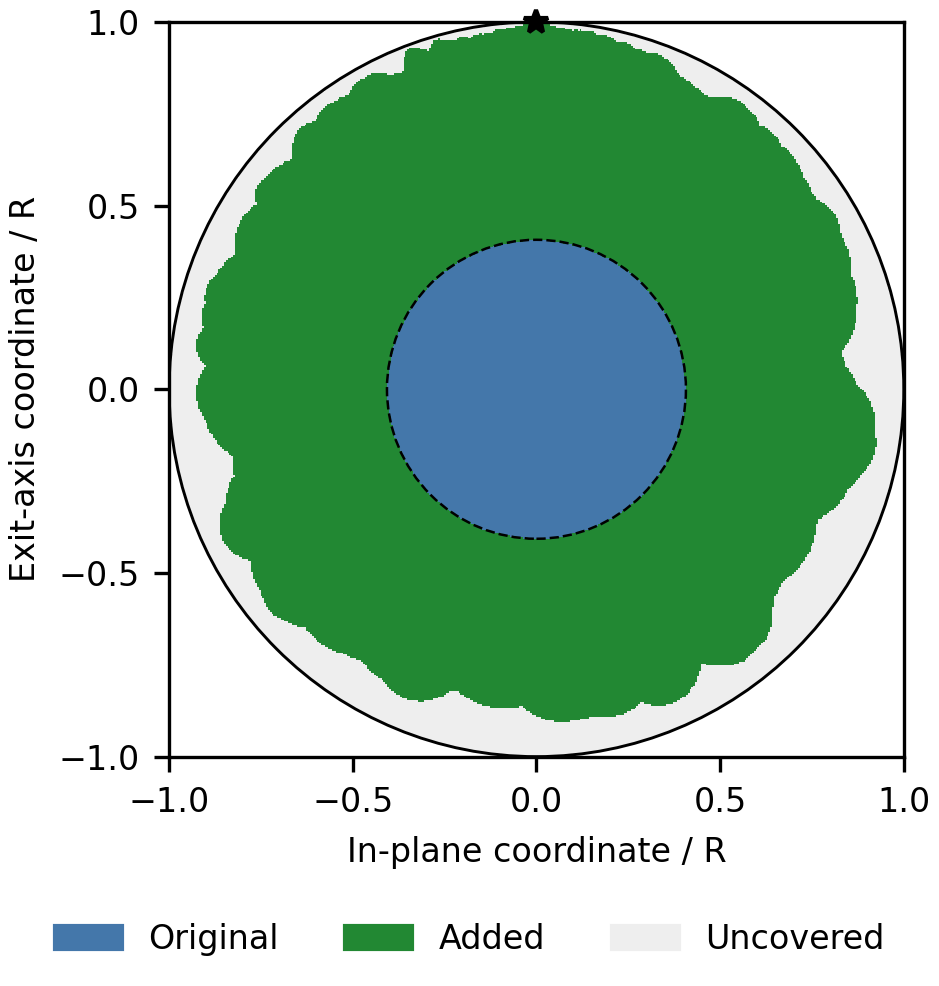}
  \caption{A representative cross-section through the ball center and
  retained exit (star), after 1024 refinement paths. Blue denotes original
  coverage, green denotes added coverage, and gray denotes uncovered space.}
  \Description{A planar section of a ball shows the original central
  attachment region surrounded by a much larger region made attachable by
  path refinement.}
  \label{fig:path-refinement-coverage-slice}
\end{figure}

\subsection{Dense Queries through a Hub}
\label{sec:hub-memory-experiment}

To assess the memory savings of hub evaluation for dense queries, we compare it with ordinary path attachment on the same base skeleton. Queries are added within a hub inside an irregular domain (Figure~\ref{fig:scene-hub}). Figure~\ref{fig:hub-memory} reports the additional persistent memory as the query count increases.

\begin{figure}[!htbp]
  \centering
  \includegraphics[width=0.9\linewidth]{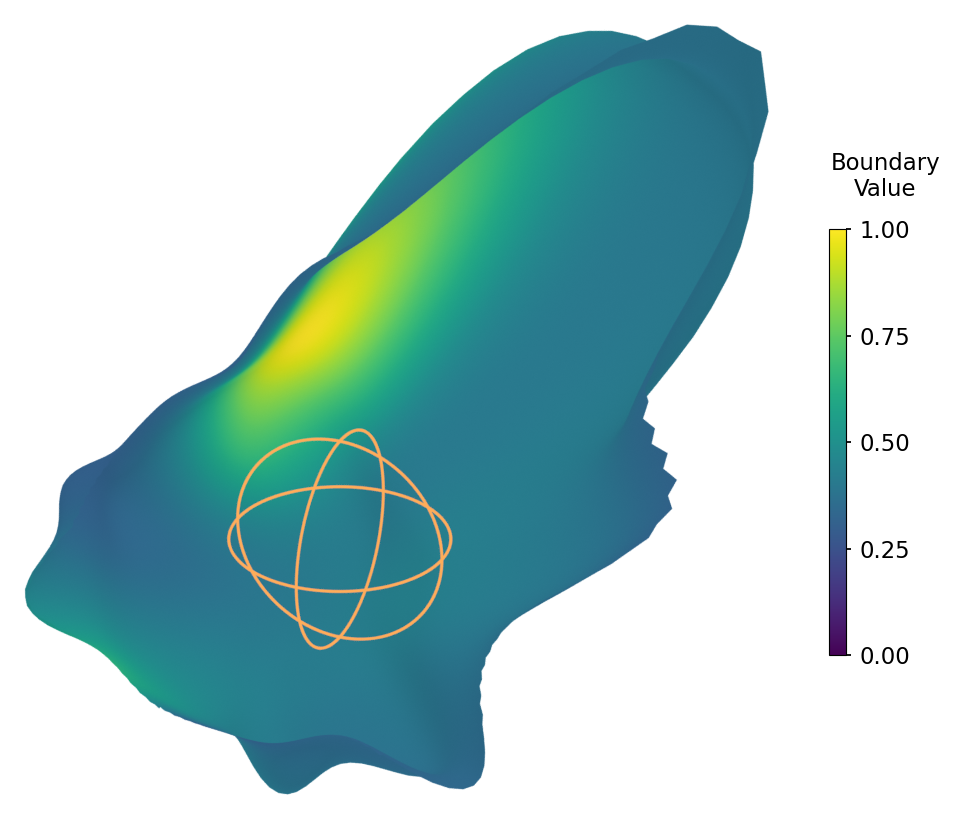}
  \caption{Hub test domain shown in cutaway. Orange circles delineate the
  hub sphere.}
  \Description{A cutaway irregular closed mesh surrounds an orange wire
  sphere marking the Hub. Individual query points are not displayed.}
  \label{fig:scene-hub}
\end{figure}

\begin{figure}[!htbp]
  \centering
  \includegraphics[width=\linewidth]{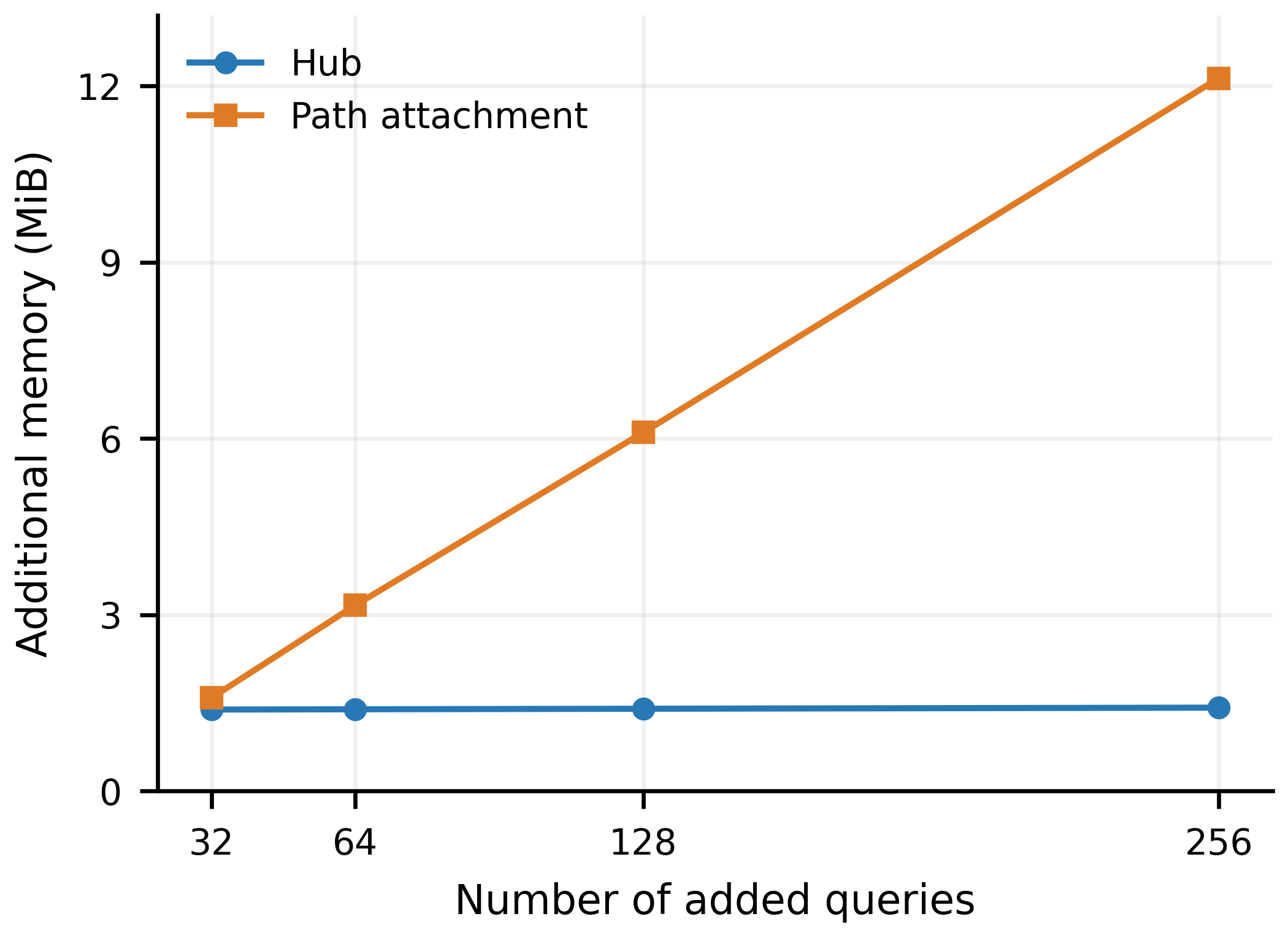}
  \caption{Additional persistent memory for newly added queries, excluding
  the shared base skeleton. Means over random seeds; the hub uses one
  fixed interface pool at all query counts.}
  \Description{Ordinary attachment memory grows approximately linearly with
  query count, while hub memory remains nearly constant.}
  \label{fig:hub-memory}
\end{figure}

The shared pool makes hub memory nearly independent of the number of added
queries over the tested range. Its fixed interface cost is amortized across
more queries, so the storage advantage increases with query density.

Despite substantial memory savings, approaching ordinary attachment's accuracy requires a larger  sample ammount. In the  trials,
an average of 3893 shared interfaces gives a mean MSE of
$1.224\times10^{-3}$, compared with $1.094\times10^{-3}$ for ordinary
attachment using 32 trajectories per query. The hub thus needs more interface samples
to compensate for shared downstream information.

\subsection{Local Repair of Dynamic Transport}
\label{sec:local-repair-experiment}

To test whether local repair preserves important continuations after a geometry change, we construct a dumbbell-shaped domain with two spherical chambers connected by a passage (Figure~\ref{fig:scene-repair}(a)). An initial skeleton is built and refined in the chamber with rapidly varying boundary data, referred to as the remote chamber. Target queries are then added in the other chamber. A small fraction of their walks traverse the passage and attach to the well-explored skeleton in the remote chamber. Finally, the obstacle shown in Figure~\ref{fig:scene-repair}(b) moves to the passage entrance, invalidating sampling balls along these connecting paths.

\begin{figure}[!htbp]
  \centering
  \includegraphics[width=\linewidth]{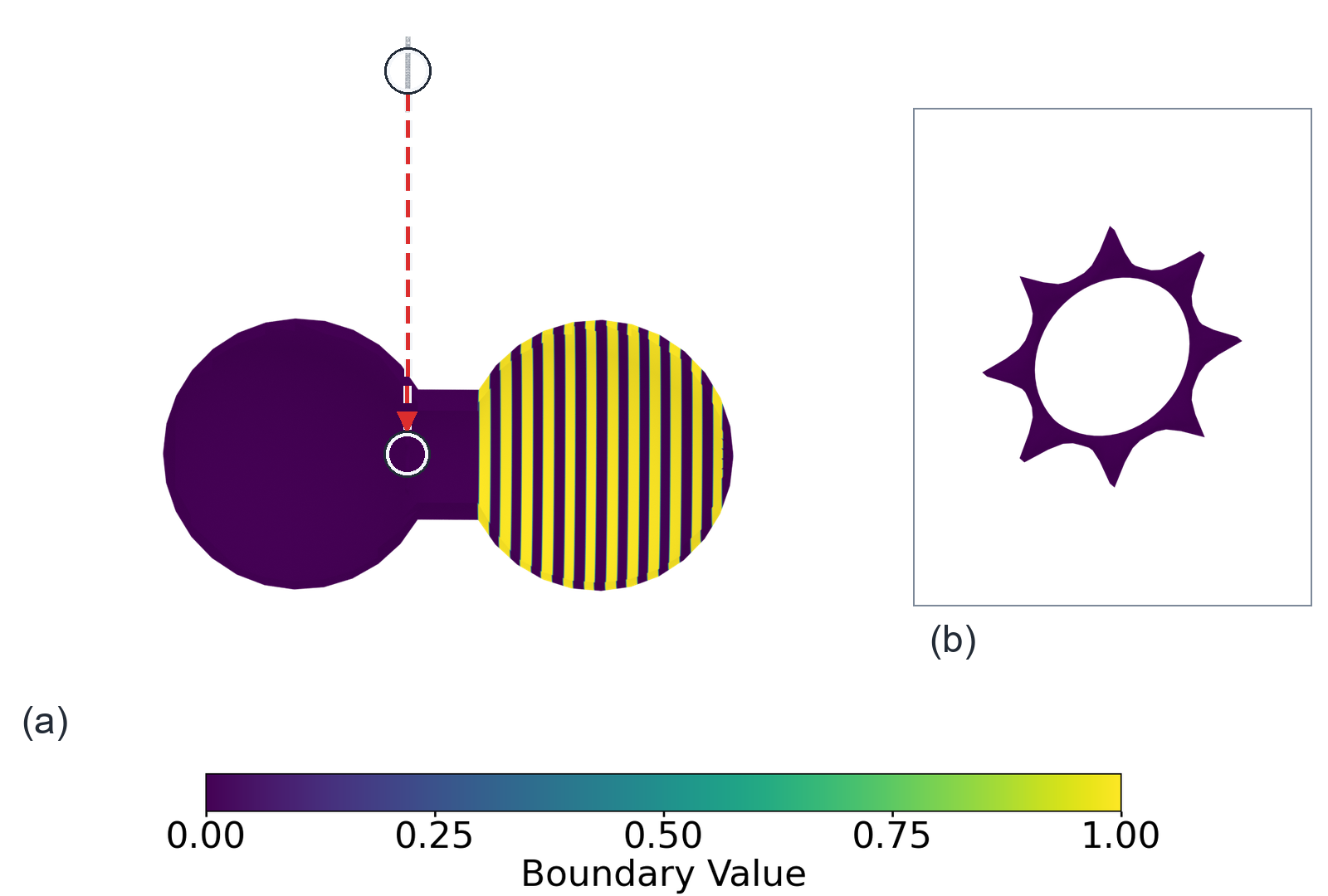}
  \caption{Local-repair setting: (a) two chambers connected by a
  passage, with varying boundary data in the remote chamber; (b) the perforated
  obstacle in detail. The arrow indicates insertion near the passage.
  The enclosure is shown in cutaway.}
  \Description{An arrow moves a thin toothed ring from its initial position
  into a two-chamber domain. A magnified view reveals its central opening.}
  \label{fig:scene-repair}
\end{figure}

We then compare conditional repair with
suffix resampling, which replaces each affected continuation with a fresh WoS
walk from its starting node. Both methods update the solution by reevaluating
the affected ancestors. Among the reachable remote nodes, we measure the fraction that remain reachable afterward. This gives the retention rate of the existing downstream skeleton.
Update time includes repair and ancestor reevaluation,
excluding the shared initial construction. Table~\ref{tab:local-repair}
reports mean results. Target-query MSE isolates the
queries that rely on the remote structure.

\begin{table}[!htbp]
  \centering
  \caption{Retention, update cost, and accuracy after a local geometry change.
  Means over matched random seeds.}
  \label{tab:local-repair}
  \begin{tabular}{lrr}
    \toprule
    Metric & \shortstack{Suffix\\resampling} & \shortstack{Conditional\\repair} \\
    \midrule
    Retention rate (\%) & 0.0 & 67.0 \\
    Update time (s) & 25.16 & 27.19 \\
    Target-query MSE ($10^{-5}$) & 2.009 & 0.934 \\
    \bottomrule
  \end{tabular}
\end{table}

Conditional repair retains access to much of the existing remote transport,
rather than replacing it when an earlier sampling ball becomes invalid.
This reduces target-query MSE by 53.5\% for an 8.0\% increase in update time.

\subsection{Bidirectional Support for Local Repair}
\label{sec:bidirectional-repair-experiment}

We next consider a small obstacle with rapidly varying boundary values and a large contribution relative to its low-valued surroundings, as shown in Figure~\ref{fig:scene-biwos}. Its spiked geometry is chosen to invalidate many sampling balls while remaining difficult for walks to hit. Conditional repair is augmented with Bi-WoS to better resolve the obstacle’s influence while retaining access to old continuations. Table~\ref{tab:bidirectional-repair} compares this extension with conditional repair alone and suffix resampling, using the same initial skeleton and query positions in matched trials.

\begin{figure}[!htbp]
  \centering
  \includegraphics[width=\linewidth]{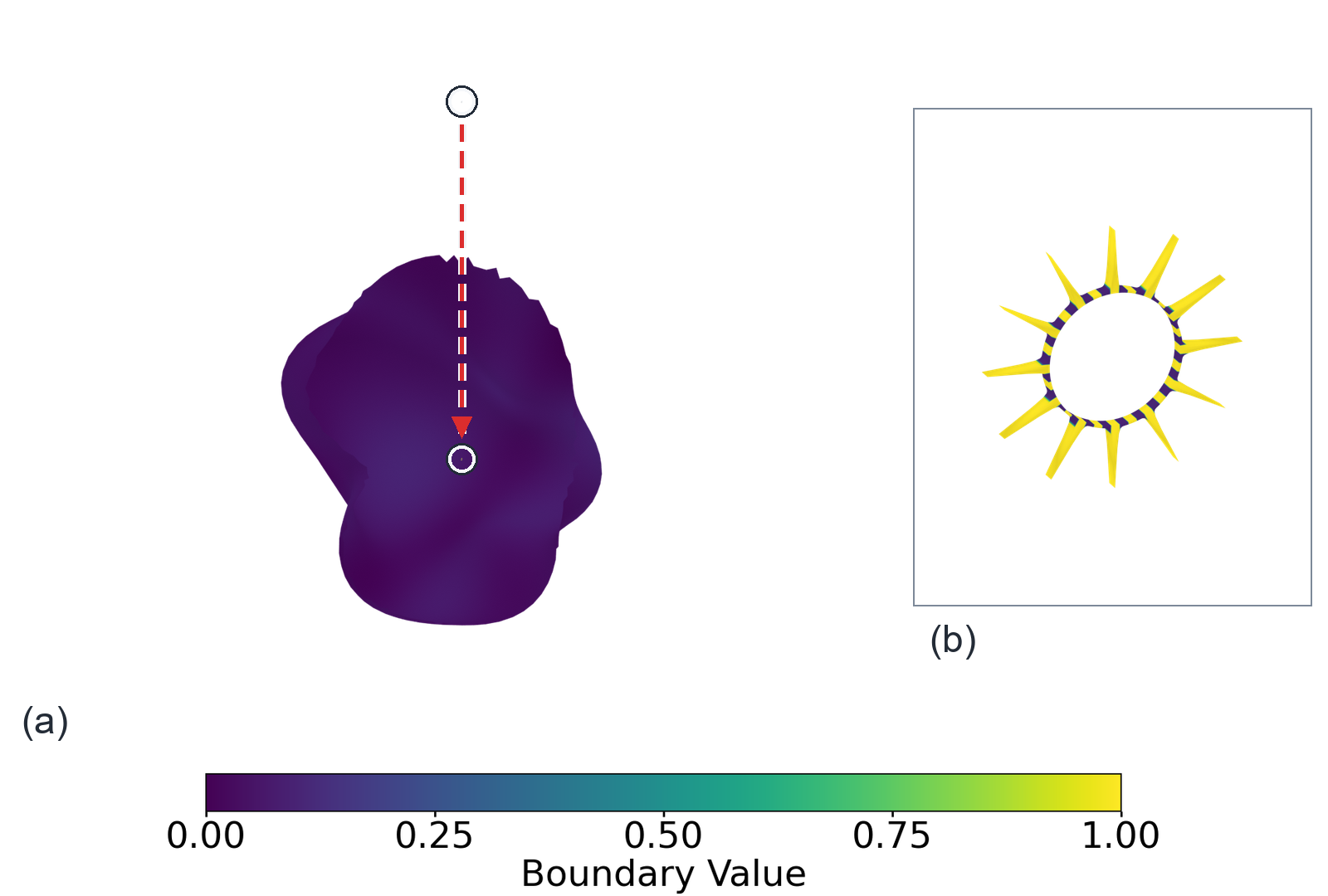}
  \caption{Small-object setting: (a) insertion into the cutaway domain;
  (b) a magnified view of the thin, spiked obstacle and its boundary data.
  The obstacle retains its actual size in the overview.}
  \Description{An insertion arrow locates a very small obstacle in a low-valued
  irregular enclosure; the detail shows its thin ring and sharp spikes.}
  \label{fig:scene-biwos}
\end{figure}

\begin{table}[!htbp]
  \centering
  \caption{All-query accuracy for a difficult-to-hit, high-contribution
  inserted object. Means over matched random seeds.}
  \label{tab:bidirectional-repair}
  \begin{tabular}{lr}
    \toprule
    Method & MSE ($10^{-5}$) \\
    \midrule
    Suffix resampling & 6.644 \\
    Conditional repair & 8.778 \\
    Conditional repair + Bi-WoS & 3.694 \\
    \bottomrule
  \end{tabular}
\end{table}

Adding Bi-WoS reduces mean MSE by 57.9\% relative to conditional repair alone
and by 44.4\% relative to suffix resampling. The conditional-only result shows
that preserving old transport does not by itself resolve a newly introduced
contribution that is rarely encountered by forward paths. Reverse sampling
complements this structural reuse: the repair can retain useful old
continuations while better resolving the influence of a small, important
boundary component. 

Together, the two experiments demonstrate local editing
that preserves existing transport, with bidirectional support improving its
accuracy in a difficult sampling regime.

\subsection{Regional Transport Editing}
\label{sec:regional-edit-experiment}

We demonstrate selective transport editing by amplifying the influence of
the volcano island's boundary on queries around the village island.
Figure~\ref{fig:regional-edit} compares the original solution with three settings of the interaction scaling factor$\gamma$. All four panels use the same
skeleton, query positions, and regional low-rank patch.

Increasing $\gamma$ strengthens the volcano's contribution at the village
queries, while query values in the other regions remain exactly unchanged.
Once the regional patch has been extracted, changing $\gamma$ requires neither
new trajectories nor reconstruction of the skeleton. The retained graph thus
serves as an editable transport operator, allowing independent control of a
selected regional influence while preserving the remaining interactions.

\begin{figure*}[!t]
  \centering
  \captionsetup[subfigure]{font=small,skip=2pt}
  \begin{subfigure}[t]{0.49\textwidth}
    \centering
    \includegraphics[width=\linewidth,trim=160bp 205bp 155bp 155bp,clip]{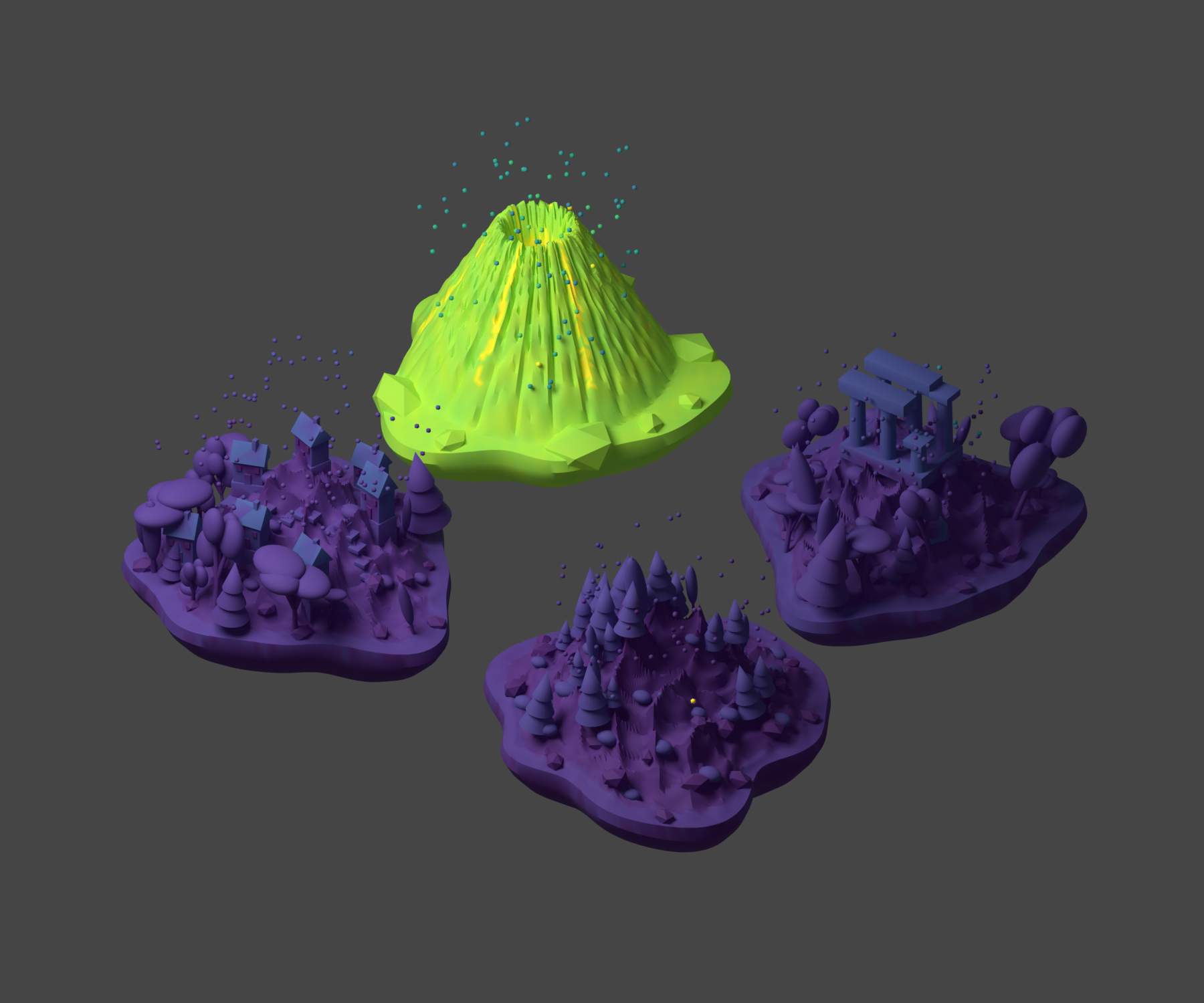}
    \caption{Original ($\gamma=1$)}
    \label{fig:regional-edit-original}
  \end{subfigure}\hfill
  \begin{subfigure}[t]{0.49\textwidth}
    \centering
    \includegraphics[width=\linewidth,trim=160bp 205bp 155bp 155bp,clip]{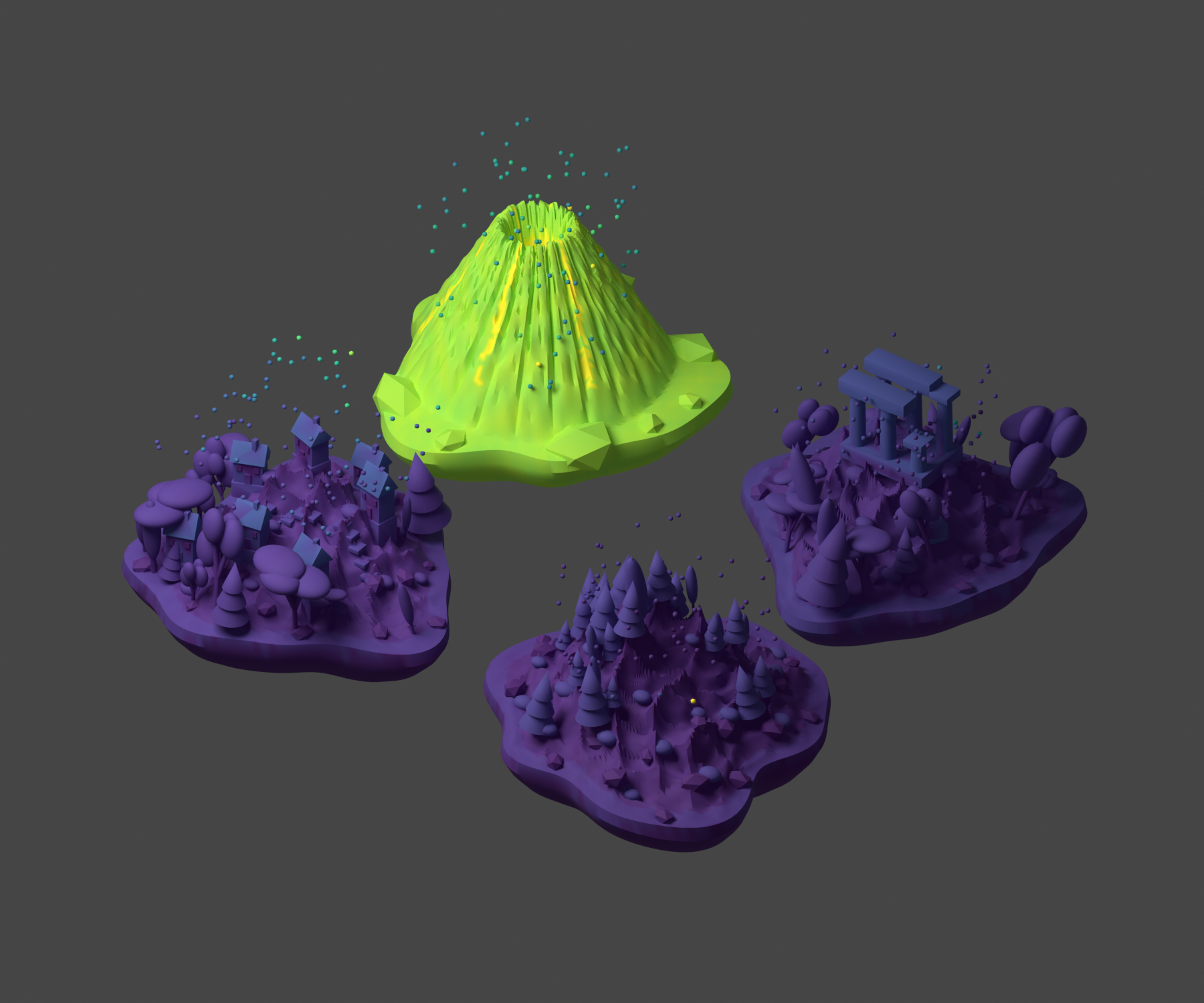}
    \caption{$\gamma=5$}
    \label{fig:regional-edit-gain5}
  \end{subfigure}
  \par\smallskip
  \begin{subfigure}[t]{0.49\textwidth}
    \centering
    \includegraphics[width=\linewidth,trim=160bp 205bp 155bp 155bp,clip]{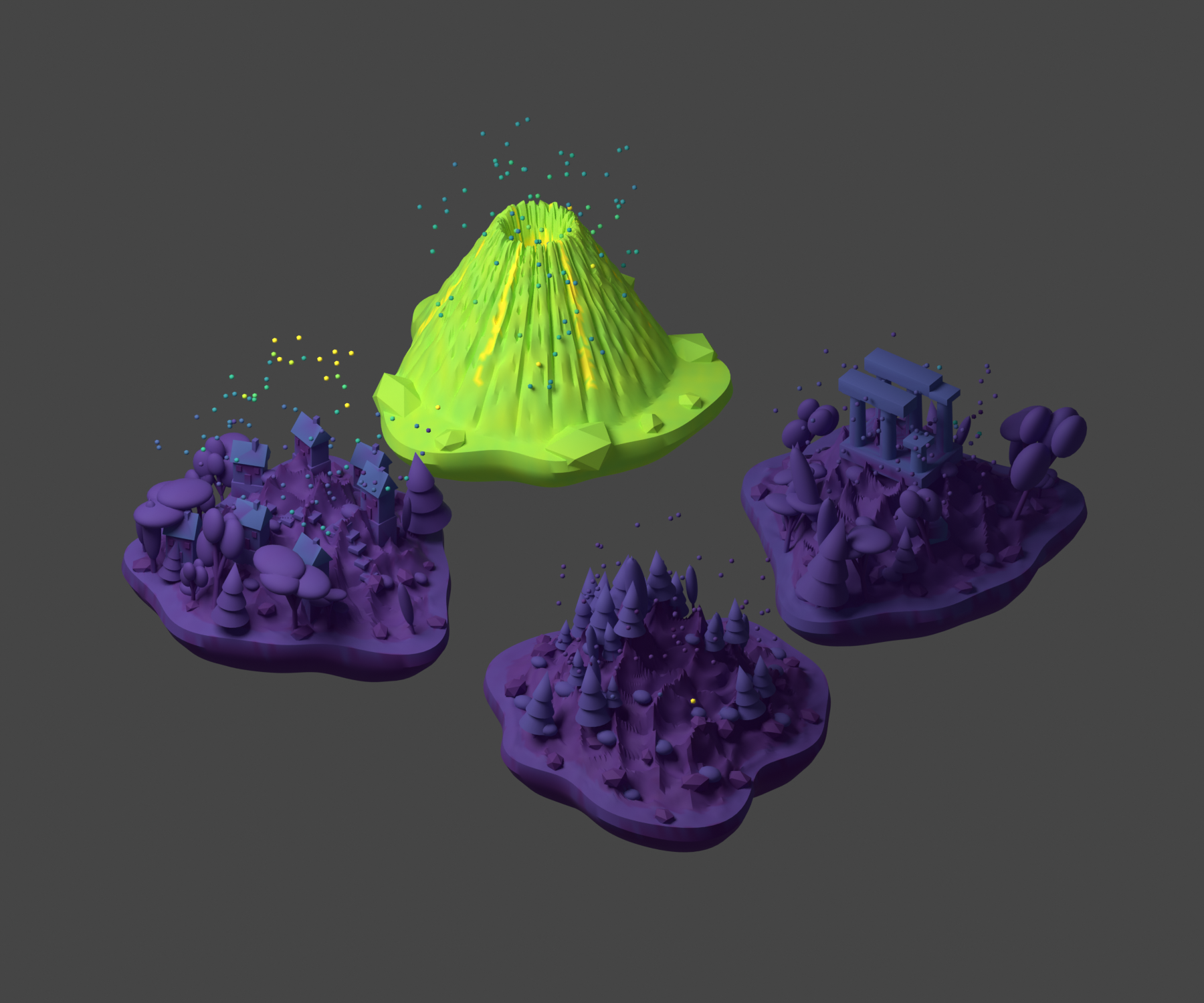}
    \caption{$\gamma=10$}
    \label{fig:regional-edit-gain10}
  \end{subfigure}\hfill
  \begin{subfigure}[t]{0.49\textwidth}
    \centering
    \includegraphics[width=\linewidth,trim=160bp 205bp 155bp 155bp,clip]{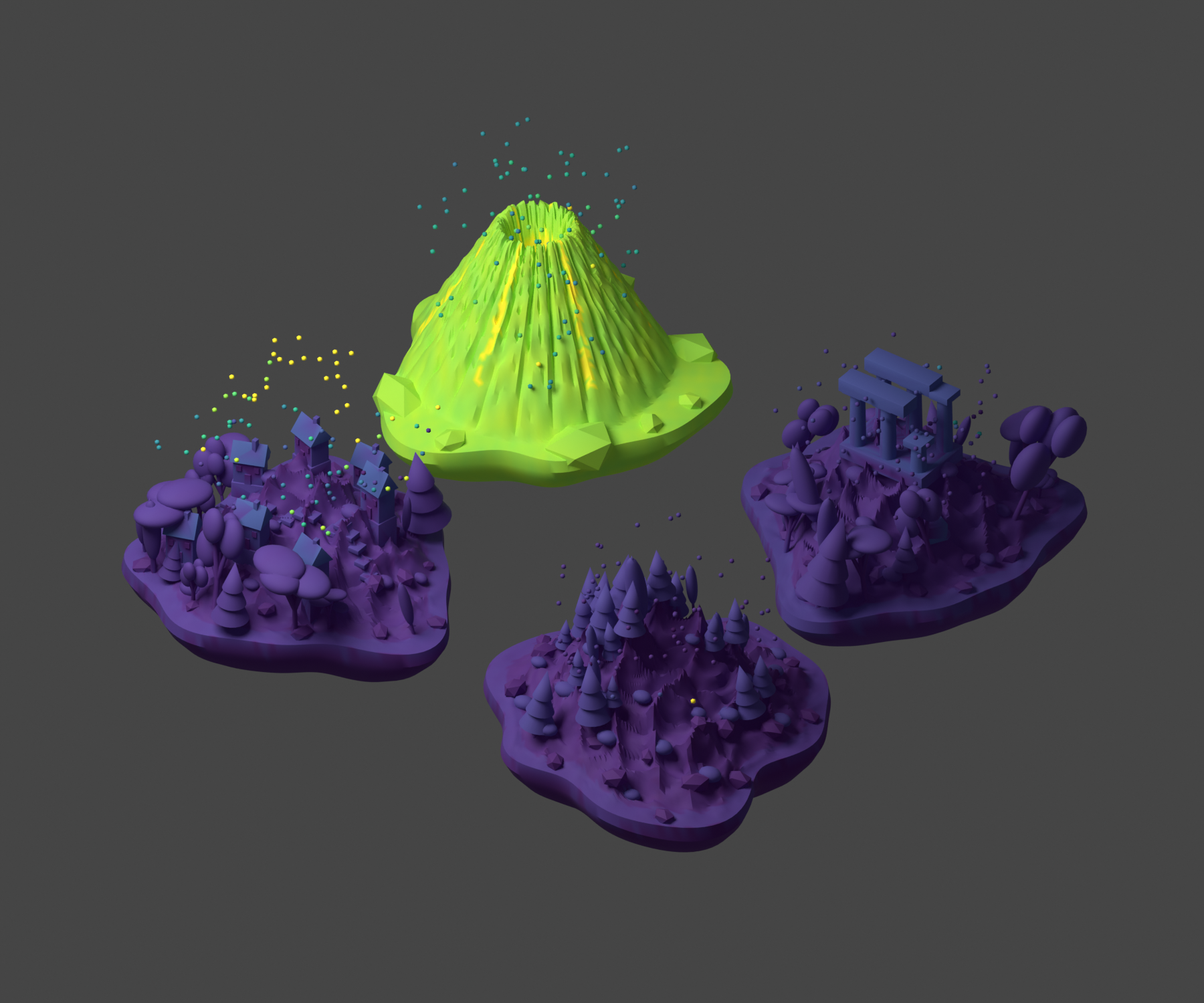}
    \caption{$\gamma=20$}
    \label{fig:regional-edit-gain20}
  \end{subfigure}
  \par\smallskip
  \includegraphics[width=0.65\textwidth]{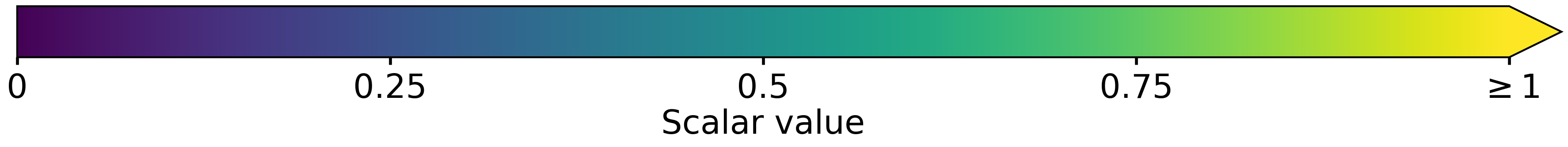}
  \caption{Selective amplification of volcano-to-village transport. Small
  spheres show query values; island surfaces show the fixed boundary values.
  Only the influence of the volcano boundary on village queries is scaled.
  All panels share the same view and viridis color scale; values above 1 use
  the upper endpoint color.}
  \Description{Four views of the same four-island scene on a dark gray background.
  As gamma increases from 1 to 5, 10, and 20, query colors around the village
  shift toward green and yellow while queries around the other islands retain
  their original values.}
  \label{fig:regional-edit}
\end{figure*}


\bibliographystyle{ACM-Reference-Format}
\bibliography{sample-base}


\appendix

\section{Analysis of Guided Neighborhood Attachment}
\label{app:guided-attachment}

We analyze a repeated-attempt model of guided neighborhood attachment to
separate reaching a retained node from controlling the accumulated
importance weights.

\subsection{Repeated-Attempt Model}

Consider successive attachment attempts with uniform target density
$P(\omega)=\frac{1}{4\pi}$ and proposal
\[
Q_n(\omega)
=
\frac{\alpha}{4\pi}
+(1-\alpha)
\frac{\kappa e^{\kappa\mu_n^\top\omega}}
     {4\pi\sinh\kappa},
\qquad 0<\alpha<1,\quad\kappa>0.
\]
Here $\omega\in\mathbb S^2$ is a unit direction, densities are with
respect to surface area on the unit sphere, $\alpha$ is the uniform
mixture fraction, and $\kappa$ is the von Mises--Fisher concentration.
The unit guiding axis $\mu_n$ and success cap
$A_n=\{\omega:\mu_n^\top\omega\ge1-2h_n\}$
are determined by the preceding history.
Here $h_n=P(A_n)$; for a target on a valid sphere
of radius $R_n$, an $\varepsilon$-neighborhood
gives $h_n=\frac{\varepsilon^2}{4R_n^2}$.
Let $T$ denote the first successful attempt and $\mathcal F_{n-1}$ the
history before attempt $n$.

\subsection{Attachment-Time Tail}
\label{app:attachment-tail-proof}

\begin{proposition}[Attachment-time tail]
\label{prop:attachment-tail}
If $h_n\ge h_->0$ before attachment, then
$\mathbb P_Q(T>N)\le(1-s_-)^N\le e^{-Ns_-}$ and $\mathbb E_Q[T]\le s_-^{-1},$ where $s_-=\alpha h_-+ (1-\alpha)\frac{1-e^{-2\kappa h_-}}{1-e^{-2\kappa}}.$
\end{proposition}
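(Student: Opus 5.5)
The plan is to show that every attempt, given the history, succeeds with probability at least $s_-$, and then to iterate this bound.

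\emph{Step 1: conditional success probability.} Fix attempt $n$ on the event $\{T\ge n\}=\{T>n-1\}$, which is $\mathcal F_{n-1}$-measurable. Given $\mathcal F_{n-1}$, the axis $\mu_n$ and cap $A_n$ are fixed. The success probability is
\[
Q_n(A_n)=\alpha h_n+(1-\alpha)\int_{A_n}\frac{\kappa e^{\kappa\mu_n^\top\omega}}{4\pi\sinh\kappa}\,d\omega .
\]
The uniform part contributes $\alpha P(A_n)=\alpha h_n$.

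For the von Mises--Fisher part we use coordinates $t=\mu_n^\top\omega$. By Archimedes, $d\omega=2\pi\,dt\,d\phi/(2\pi)$, so the area element in $t$ is $2\pi\,dt$. The cap is $t\in[1-2h_n,1]$. Hence the vMF mass is
\[
\int_{1-2h_n}^{1}\frac{\kappa e^{\kappa t}}{2\sinh\kappa}\,dt=\frac{e^{\kappa}-e^{\kappa(1-2h_n)}}{e^{\kappa}-e^{-\kappa}}=\frac{1-e^{-2\kappa h_n}}{1-e^{-2\kappa}}.
\]
Both terms are nondecreasing in $h_n$. Since $h_n\ge h_-$ before attachment, we get $\mathbb P_Q(\text{success at }n\mid\mathcal F_{n-1})\ge s_-$ on $\{T\ge n\}$.

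\emph{Step 2: tail bound.} Write $\mathbb P_Q(T>N)=\mathbb E_Q[\mathbf 1_{T>N-1}\,\mathbb P_Q(\text{fail at }N\mid\mathcal F_{N-1})]\le(1-s_-)\,\mathbb P_Q(T>N-1)$. Induction from $\mathbb P_Q(T>0)=1$ then gives $(1-s_-)^N$. The second inequality follows from $1-x\le e^{-x}$.

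\emph{Step 3: expectation.} Use $\mathbb E_Q[T]=\sum_{N\ge0}\mathbb P_Q(T>N)\le\sum_N(1-s_-)^N=s_-^{-1}$, where $s_->0$ because $h_->0$, $\alpha<1$ and $\kappa>0$.

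The only delicate step is Step 1: the integration must be done carefully, and the conditional bound must hold uniformly over histories. Adaptivity of $\mu_n$ is harmless because the cap is centred on the guiding axis $\mu_n$ itself, so the vMF mass depends only on $h_n$.
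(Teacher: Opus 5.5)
Your proposal is correct and follows essentially the same route as the paper. You compute the conditional cap probability $\alpha h_n+(1-\alpha)\frac{1-e^{-2\kappa h_n}}{1-e^{-2\kappa}}$ via the substitution $t=\mu_n^\top\omega$, bound it below by $s_-$ uniformly over histories using monotonicity in $h_n$, iterate conditional expectations to get the geometric tail, and sum the tail to bound $\mathbb E_Q[T]$; your explicit recursion $\mathbb P_Q(T>N)\le(1-s_-)\,\mathbb P_Q(T>N-1)$ and the check that $s_->0$ spell out details the paper leaves implicit.
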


\begin{proof}
Conditioned on the history before attempt $n$, the guiding axis $\mu_n$
and cap
\[
A_n=\{\omega:\mu_n^\top\omega\ge1-2h_n\}
\]
are fixed. Since $P(A_n)=h_n$, the uniform component of $Q_n$ contributes
$\alpha h_n$ to the success probability.

For the von Mises--Fisher component, write
$t=\mu_n^\top\omega$. Under the rotationally symmetric surface measure on
$\mathbb S^2$, integration over the cap gives
\begin{align}
\int_{A_n}
\frac{\kappa e^{\kappa\mu_n^\top\omega}}
     {4\pi\sinh\kappa}\,d\omega
&=
\frac{\kappa}{2\sinh\kappa}
\int_{1-2h_n}^{1}e^{\kappa t}\,dt \\
&=
\frac{e^\kappa-e^{\kappa(1-2h_n)}}
     {e^\kappa-e^{-\kappa}} \\
&=
\frac{1-e^{-2\kappa h_n}}
     {1-e^{-2\kappa}} .
\end{align}
Therefore the conditional success probability is
\[
s(h_n)
=
\alpha h_n
+
(1-\alpha)
\frac{1-e^{-2\kappa h_n}}
     {1-e^{-2\kappa}} .
\]
The function $s(h)$ is increasing in $h$, so
$h_n\ge h_-$ implies
\[
\mathbb Q(T=n\mid T\ge n,\mathcal F_{n-1})
=
s(h_n)
\ge
s_-,
\]
where
\[
s_-
=
\alpha h_-
+
(1-\alpha)
\frac{1-e^{-2\kappa h_-}}
     {1-e^{-2\kappa}} .
\]
Hence
\[
\mathbb Q(T>n\mid T>n-1,\mathcal F_{n-1})
\le
1-s_- .
\]
Iterating conditional expectations gives
\[
\mathbb Q(T>N)
\le
(1-s_-)^N
\le
e^{-Ns_-}.
\]

Finally, for a positive integer-valued stopping time,
\[
\mathbb E_Q[T]
=
\sum_{N=0}^{\infty}\mathbb Q(T>N)
\le
\sum_{N=0}^{\infty}(1-s_-)^N
=
\frac1{s_-}.
\]
\end{proof}

\subsection{Stopped-Weight Divergence}
\label{app:weight-divergence-proof}

An exponential attachment-time tail does not guarantee a finite second
moment of the accumulated likelihood ratio.

\begin{proposition}[Stopped-weight divergence]
\label{prop:weight-divergence}
Define $\ell_n=\frac{P(\omega_n)}{Q_n(\omega_n)}$ and
$W_n=\prod_{j=1}^{n}\ell_j$. If
$0<h_-\le h_n\le h_+<1-\alpha$ before attachment,
then, for sufficiently large fixed $\kappa$, $\mathbb E_Q[W_T]=1$ but $\mathbb E_Q[W_T^2]=+\infty$, despite the exponential tail in
Proposition~\ref{prop:attachment-tail}.
\end{proposition}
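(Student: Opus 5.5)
The plan is to treat $W_n$ as the likelihood-ratio process between the uniform law $P$ and the adaptive proposals $Q_n$, and to handle the two moments differently. The first moment follows from a change of measure at the stopping time $T$. The second moment follows from a one-step lower bound on the conditional second moment of $\ell_n$ on the failure event, which grows geometrically in $n$.

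\emph{First moment.} Since $\mu_n$ and $A_n$ are $\mathcal F_{n-1}$-measurable, I will check that
\[
\mathbb E_Q[\ell_n\,\phi(\omega_n)\mid\mathcal F_{n-1}]=\int\phi\,P\,d\omega
\]
for every bounded $\phi$. Applying this iteratively to the event $\{T=n\}=\{\omega_1\notin A_1,\dots,\omega_{n-1}\notin A_{n-1},\ \omega_n\in A_n\}$ gives $\mathbb E_Q[W_n\mathbf 1_{T=n}]=\mathbb P_P(T=n)$. Under $P$, each attempt succeeds with conditional probability $h_n\ge h_->0$, so $\mathbb P_P(T>N)\le(1-h_-)^N\to0$. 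Summing over $n$ then yields
\[
\mathbb E_Q[W_T]=\mathbb E_Q[W_T\mathbf 1_{T<\infty}]=\mathbb P_P(T<\infty)=1,
\]
where Proposition~\ref{prop:attachment-tail} ensures $T<\infty$ $Q$-a.s.

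\emph{Second moment.} I will write $\mathbb E_Q[W_T^2]=\sum_n\mathbb E_Q[W_n^2\mathbf 1_{T=n}]$ and condition step by step, using the conditional failure and success factors
\[
F_n=\mathbb E_Q[\ell_n^2\mathbf 1_{A_n^c}\mid\mathcal F_{n-1}]=\int_{A_n^c}\frac{P^2}{Q_n}\,d\omega,
\qquad
S_n=\int_{A_n}\frac{P^2}{Q_n}\,d\omega .
\]
Then
\[
\mathbb E_Q[W_n^2\mathbf 1_{T=n}]=\mathbb E_Q\Bigl[\prod_{j<n}\ell_j^2\mathbf 1_{A_j^c}\cdot S_n\Bigr],
\]
and a tower argument reduces everything to two deterministic bounds that are uniform over histories:
\begin{itemize}
\item $S_n\ge h_-^2$. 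This follows from Cauchy--Schwarz, $\int_A P^2/Q\ge P(A)^2/Q(A)\ge h_n^2$, since $Q(A)\le1$.
\item $F_n\ge\rho>1$. This is the key estimate.
\end{itemize}
Combining them gives $\mathbb E_Q[W_n^2\mathbf 1_{T=n}]\ge h_-^2\rho^{n-1}$, and the series diverges.

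\emph{The key estimate $F_n\ge\rho>1$.} This is the main obstacle, because the bound must hold uniformly in the history, i.e.\ in the axis $\mu_n$ and in $h_n\in[h_-,h_+]$. Rotational invariance removes the dependence on $\mu_n$. On $A_n^c$ we have $t=\mu_n^\top\omega<1-2h_n\le1-2h_-$, so the vMF density there is at most
\[
\frac{\kappa e^{\kappa(1-2h_-)}}{4\pi\sinh\kappa}=:\frac{\varepsilon(\kappa)}{4\pi},
\]
and $\varepsilon(\kappa)\to0$ as $\kappa\to\infty$. Hence $Q_n\le\bigl(\alpha+\varepsilon(\kappa)\bigr)/4\pi$ on $A_n^c$, and
\[
F_n\ge\frac{1-h_n}{\alpha+\varepsilon(\kappa)}\ge\frac{1-h_+}{\alpha+\varepsilon(\kappa)} .
\]
The hypothesis $h_+<1-\alpha$ gives $(1-h_+)/\alpha>1$. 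So for all sufficiently large fixed $\kappa$, we obtain $\rho:=(1-h_+)/(\alpha+\varepsilon(\kappa))>1$ uniformly.

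This also explains the contrast with Proposition~\ref{prop:attachment-tail}. The exponential tail rate $1-s_-$ is beaten by the per-failure weight inflation, since $(1-s_-)\cdot\rho$ effectively exceeds one.
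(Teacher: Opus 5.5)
Your proof is correct and takes essentially the same route as the paper. A history-uniform bound $\mathbb E_Q[\ell_n^2\mathbf 1_{A_n^c}\mid\mathcal F_{n-1}]>1$ for large $\kappa$, together with a positive success factor, makes $\sum_n\mathbb E_Q[W_n^2\mathbf 1_{\{T=n\}}]$ diverge geometrically, and the first moment follows from the likelihood-ratio identity plus $\mathbb P_P(T>N)\le(1-h_-)^N$. The differences are only technical: you bound the von Mises--Fisher density by its supremum on the whole failure cap, which gives an explicit $\rho=(1-h_+)/(\alpha+\varepsilon(\kappa))$ where the paper uses dominated convergence on $[-1,1-2h_+]$, and you get the $\kappa$-uniform success bound $h_-^2$ from Cauchy--Schwarz instead of integrating over $[1-2h_-,1]$. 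Your closing heuristic $(1-s_-)\rho>1$ double-counts the failure probability, which $F_n$ already contains; the operative criterion is simply $F_n\ge\rho>1$.
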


\begin{proof}
Condition on the history $\mathcal F_{n-1}$ before attempt $n$, so that
$\mu_n$, $h_n$, and $A_n$ are fixed. Write
$t=\mu_n^\top\omega$. Relative to the uniform target density $P$, the
proposal satisfies
\[
\frac{Q_n(\omega)}{P(\omega)}
=
D_\kappa(t)
:=
\alpha
+
(1-\alpha)
\frac{\kappa e^{\kappa t}}{\sinh\kappa},
\]
and therefore
\[
\ell_n(\omega)
=
\frac{P(\omega)}{Q_n(\omega)}
=
D_\kappa(t)^{-1}.
\]

Since $t$ is uniform on $[-1,1]$ with density $\frac{1}{2}$ under $P$, for any
measurable set $C$ depending only on $t$,
\[
\mathbb E_Q[
\ell_n^2\mathbf1_C
\mid\mathcal F_{n-1}]
=
\int_C\frac{P(\omega)^2}{Q_n(\omega)}\,d\omega
=
\frac12\int_C\frac{dt}{D_\kappa(t)}.
\]

Before attachment,
$h_n\le h_+$ implies
$1-2h_n\ge1-2h_+$, so the failure region
$\{t<1-2h_n\}$ contains
$[-1,1-2h_+]$. Hence
\[
\mathbb E_Q[
\ell_n^2\mathbf1_{\{\omega_n\notin A_n\}}
\mid\mathcal F_{n-1}]
\ge
a_\kappa,
\]
where
\[
a_\kappa
=
\frac12
\int_{-1}^{1-2h_+}
\frac{dt}{D_\kappa(t)}.
\]
Similarly, $h_n\ge h_-$ implies
$1-2h_n\le1-2h_-$, so the success region contains
$[1-2h_-,1]$, and therefore
\[
\mathbb E_Q[
\ell_n^2\mathbf1_{\{\omega_n\in A_n\}}
\mid\mathcal F_{n-1}]
\ge
b_\kappa,
\]
with
\[
b_\kappa
=
\frac12
\int_{1-2h_-}^{1}
\frac{dt}{D_\kappa(t)}
>0.
\]

For every fixed $t<1$,
\[
\frac{\kappa e^{\kappa t}}{\sinh\kappa}
\longrightarrow 0
\qquad
(\kappa\to\infty).
\]
Moreover,
$D_\kappa(t)\ge\alpha$, hence
$D_\kappa(t)^{-1}\le\alpha^{-1}$.
Dominated convergence therefore gives
\[
a_\kappa
\longrightarrow
\frac12
\int_{-1}^{1-2h_+}
\frac{dt}{\alpha}
=
\frac{1-h_+}{\alpha}.
\]
Since $h_+<1-\alpha$,
\[
\frac{1-h_+}{\alpha}>1.
\]
Thus, for all sufficiently large fixed $\kappa$,
$a_\kappa>1$.

On the event $\{T=n\}$, the first $n-1$ attempts fail and the $n$th
succeeds. Repeated conditioning with the bounds above yields
\[
\mathbb E_Q[
W_n^2\mathbf1_{\{T=n\}}
]
\ge
a_\kappa^{\,n-1}b_\kappa.
\]
Consequently,
\begin{align}
\mathbb E_Q[W_T^2]
&=
\sum_{n=1}^{\infty}
\mathbb E_Q[
W_n^2\mathbf1_{\{T=n\}}
] \\
&\ge
b_\kappa
\sum_{n=1}^{\infty}
a_\kappa^{\,n-1}
=
+\infty .
\end{align}

It remains to establish the first moment.
Under the target law $P$, the conditional probability of success at every
pre-attachment step equals $h_n\ge h_-$.
Hence
\[
\mathbb P_P(T>N)
\le
(1-h_-)^N,
\]
so $T<\infty$ almost surely under $P$.

The likelihood ratio satisfies
\[
\mathbb E_Q[
\ell_n\mid\mathcal F_{n-1}]
=
\int Q_n(\omega)
\frac{P(\omega)}{Q_n(\omega)}
\,d\omega
=
1,
\]
and, similarly,
\begin{align}
\mathbb E_Q[
\ell_n
\mathbf1_{\{\omega_n\notin A_n\}}
\mid\mathcal F_{n-1}]
&=
P(A_n^c) \\
&=
1-h_n \\
&\le
1-h_-.
\end{align}
Therefore, by repeated conditioning,
\[
\mathbb E_Q[
W_N\mathbf1_{\{T>N\}}
]
\le
(1-h_-)^N.
\]

For every finite $N$, partitioning according to whether the stopping event
has occurred gives
\[
1
=
\sum_{n=1}^{N}
\mathbb E_Q[
W_n\mathbf1_{\{T=n\}}
]
+
\mathbb E_Q[
W_N\mathbf1_{\{T>N\}}
].
\]
Letting $N\to\infty$, the final term vanishes, and monotone convergence of
the nonnegative partial sums gives
\[
\mathbb E_Q[W_T]
=
\sum_{n=1}^{\infty}
\mathbb E_Q[
W_n\mathbf1_{\{T=n\}}
]
=
1.
\]
Thus
\[
\mathbb E_Q[W_T]=1,
\qquad
\mathbb E_Q[W_T^2]=+\infty.
\]
\end{proof}

\section{Proofs and Implementation Details for Path Refinement}
\label{app:path-refinement}

\subsection{Correctness of Conditional Refinement}
\label{app:refinement-proof}

Let $B=B(p,R)$ and let $Q=X_\tau$ denote the first exit of Brownian
motion from $B$. For an intermediate stopping point $P_k$ generated before
$\tau$, the strong Markov property gives
\begin{equation}
\mathbb P_p\{P_k\in dx,Q\in dA_q\}
=
\mathbb P_p\{P_k\in dx\}
K_B(x,q)\,dA_q .
\label{eq:app-refinement-joint}
\end{equation}
Consequently,
\begin{equation}
\mathbb P_p\{Q\in dA_q\mid P_k=x\}
=
K_B(x,q)\,dA_q .
\label{eq:app-refinement-conditional-exit}
\end{equation}
If $U_Q$ is conditionally unbiased,
$\mathbb E[U_Q\mid Q=q]=u(q)$, and refinement samples are generated
independently of its downstream randomness given $Q$, then
\begin{equation}
\mathbb E[U_Q\mid P_k=x]
=
\int_{\partial B}
K_B(x,q)u(q)\,dA_q
=
u(x).
\label{eq:app-refinement-node-semantics}
\end{equation}
Hence every inserted state has the same pointwise continuation semantics
as an ordinary WoS state.

Conditioning the local transition on the known final exit $Q=q$ gives
\begin{equation}
\begin{aligned}
\mathbb P
\{X_{\tau_{B_k}}\in dA_s\mid Q=q\}
&=
\frac{
|\partial B_k|^{-1}dA_s\,
K_B(s,q)dA_q
}{
K_B(p_k,q)dA_q
}\\
&=
\frac{1}{|\partial B_k|}
\frac{K_B(s,q)}
     {K_B(p_k,q)}
dA_s ,
\end{aligned}
\end{equation}
which is Eq.~\eqref{eq:refinement-conditional-density}.
Thus the refined process is exactly the corresponding Doob $h$-transform
with $h(x)=K_B(x,q)$.

For comparison, replacing an ordinary retained edge of coefficient $w$
by a chain $p\to p_1\to\cdots\to p_m\to q$ uses
\begin{equation}
\omega_{p\to p_1}=w,\qquad
\omega_{p_j\to p_{j+1}}=1,\qquad
\omega_{p_m\to q}=1.
\label{eq:refinement-uniform-weights}
\end{equation}
Its realized contribution at $p$ remains $wU_Q$, and every intermediate
node evaluates to $U_Q$. The construction in
Section~\ref{sec:path-refinement} instead preserves the original edge
and connects each intermediate node directly to $q$. Its intermediate
values and the original contribution at $p$ are identical to those of
the replacement chain.

\subsection{Importance-Sampled Endpoints}
\label{app:refinement-endpoints}

Suppose now that $Q$ is sampled from area density $\rho(q)$ and define
\[
\ell(q)=\frac{K_B(p,q)}{\rho(q)}.
\]
Let $M\ge\sup_q\ell(q)$ be fixed before observing $Q$, and let
$A=\{V\le\frac{\ell(Q)}{M}\}$ for an independent
$V\sim\mathrm{Uniform}(0,1)$.
For a replacement chain, accept the endpoint once and use
\begin{equation}
\omega_{p\to p_1}=wM,\qquad
\omega_{p_j\to p_{j+1}}=1,\qquad
\omega_{p_m\to q}=1.
\label{eq:refinement-is-weights}
\end{equation}
A rejected attempt contributes zero and creates no intermediate nodes.
Then
\begin{equation}
\begin{aligned}
\mathbb Q\{A,Q\in dA_q\}
&=
\rho(q)\frac{\ell(q)}{M}\,dA_q\\
&=
\frac1M K_B(p,q)\,dA_q .
\end{aligned}
\end{equation}
Integrating over $q$ yields $\mathbb Q(A)=\frac{1}{M}$, and therefore
\begin{equation}
\mathbb Q\{Q\in dA_q\mid A\}
=
K_B(p,q)\,dA_q .
\label{eq:app-endpoint-correction}
\end{equation}

Conditional refinement given an accepted endpoint follows the same
Brownian-bridge law as above. Using
\[
\mathbb P_p\{P_k\in dx\mid Q=q\}
=
\frac{
\mathbb P_p\{P_k\in dx\}K_B(x,q)
}{
K_B(p,q)
},
\]
we obtain
\begin{equation}
\begin{aligned}
&\mathbb Q\{A,P_k\in dx,Q\in dA_q\}\\
&\quad=
\rho(q)dA_q\,
\frac{K_B(p,q)}{M\rho(q)}
\frac{
\mathbb P_p\{P_k\in dx\}K_B(x,q)
}{
K_B(p,q)
}\\
&\quad=
\frac1M
\mathbb P_p\{P_k\in dx\}
K_B(x,q)dA_q .
\end{aligned}
\end{equation}
Hence
\[
\mathbb Q\{Q\in dA_q\mid A,P_k=x\}
=
K_B(x,q)dA_q,
\]
and therefore
\[
\mathbb E_{\mathbb Q}
[U_Q\mid A,P_k=x]
=
u(x).
\]

The compensating first-edge factor preserves the original branch
contribution:
\begin{equation}
\mathbb E\!\left[
wM\mathbf1_AU_Q\mid Q,U_Q
\right]
=
wM\frac{\ell(Q)}M U_Q
=
w\ell(Q)U_Q .
\end{equation}

The endpoint test must be performed only once per attempt.
Repeating the Bernoulli test for the same fixed endpoint until success
would condition on eventual acceptance and hence restore the original
proposal $\rho$ rather than the target Poisson law.

\subsection{Path Splitting}

Let $A_1,\ldots,A_N$ be conditionally independent endpoint tests for the
same retained $(Q,U_Q)$, each with
\[
\mathbb P(A_n\mid Q)=\frac{\ell(Q)}M .
\]
Accepted replacement chains use first-edge weight $\frac{wM}{N}$.
The split estimator is
\begin{equation}
\widehat C_N
=
\frac{wM}{N}
\sum_{n=1}^{N}\mathbf1_{A_n}U_Q .
\label{eq:refinement-splitting}
\end{equation}
Conditioning on $(Q,U_Q)$ gives
\[
\mathbb E[\widehat C_N\mid Q,U_Q]
=
\frac{wM}{N}
N\frac{\ell(Q)}M U_Q
=
w\ell(Q)U_Q .
\]
Since the indicators are conditionally independent Bernoulli variables,
\begin{align}
\operatorname{Var}
(\widehat C_N\mid Q,U_Q)
&=
\frac{w^2M^2}{N^2}
N
\frac{\ell(Q)}M
\left(1-\frac{\ell(Q)}M\right)
U_Q^2\\
&=
\frac{w^2}{N}
\ell(Q)\bigl(M-\ell(Q)\bigr)U_Q^2 .
\end{align}
Thus splitting reduces only the additional endpoint-rejection variance by
$\frac{1}{N}$; the shared exit $Q$ and downstream estimator $U_Q$ are not
independently resampled.

For ordinary WoS exits, $\ell=M=1$, so every branch is accepted and its
first-edge coefficient becomes $\frac{w}{N}$.

For a desired expected count $\lambda$ of accepted branches, choose a
predetermined integer $N\approx\lambda M$.
The number $N$ must be specified independently of the realized endpoint.
Likewise, replacing $\mathbf1_{A_n}$ by its expectation $\frac{\ell(Q)}{M}$
would preserve the scalar contribution but would retain intermediate
states for every proposal endpoint. Those retained states would then be
selected under the proposal law $\rho$ and would not in general satisfy
the pointwise semantics of
Eq.~\eqref{eq:app-refinement-node-semantics}.

\begin{proposition}[Validity of path refinement]
\label{prop:refinement-validity}
Assume that $U_Q$ is conditionally unbiased given $Q$, and that refinement
samples use independent randomness given $Q$. The retained intermediate
states then satisfy $\mathbb E[U_{P_k}\mid P_k=x]=u(x)$. For
importance-sampled exits, the one-shot endpoint correction preserves the
original expected branch contribution, with
\[
\mathbb E[\widehat C_N\mid Q,U_Q]=w\ell(Q)U_Q,
\qquad
\operatorname{Var}(\widehat C_N\mid Q,U_Q)
=\frac{w^2}{N}\ell(Q)\bigl(M-\ell(Q)\bigr)U_Q^2.
\]
\end{proposition}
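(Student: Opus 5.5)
The plan is to prove the two halves of Proposition~\ref{prop:refinement-validity} separately. The pointwise semantics follow from the strong Markov property, and the moment identities follow from conditioning on the retained pair $(Q,U_Q)$.

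\textbf{Pointwise semantics.} Fix an intermediate stopping point $P_k$ produced before the exit time $\tau$ of $B$. There are two cases.
\begin{itemize}
\item For uniform exits, I will use Eq.~\eqref{eq:app-refinement-joint}. It comes from the strong Markov property at the stopping time defining $P_k$. Dividing by the marginal of $P_k$ gives the conditional exit law Eq.~\eqref{eq:app-refinement-conditional-exit}.
\item For importance-sampled exits, I will use the joint law derived above with the one-shot acceptance event $A$. It shows that $\mathbb Q\{Q\in dA_q\mid A,P_k=x\}=K_B(x,q)\,dA_q$.
\end{itemize}
In both cases, $U_{P_k}=U_Q$, because every retained node has unit-weight edges to $q$. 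Refinement randomness is independent of the downstream randomness given $Q$. So by the tower property,
\[
\mathbb E[U_{P_k}\mid P_k=x]=\mathbb E\bigl[\mathbb E[U_Q\mid Q,P_k]\bigm| P_k=x\bigr]=\int_{\partial B}K_B(x,q)\,u(q)\,dA_q .
\]
This equals $u(x)$ by the Poisson representation of the harmonic function $u$ on $B\subset\Omega$.

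\textbf{Moment identities.} Condition on $(Q,U_Q)$. The $N$ tests $\mathbf 1_{A_n}$ are then i.i.d.\ Bernoulli with parameter $\ell(Q)/M$, which lies in $[0,1]$ because $M\ge\sup\ell$. Hence $\widehat C_N$ is $(wM/N)\,U_Q$ times a $\mathrm{Binomial}(N,\ell(Q)/M)$ variable. Its conditional mean is $w\ell(Q)U_Q$, and its conditional variance is $\tfrac{w^2M^2}{N^2}\,N\,\tfrac{\ell}{M}\bigl(1-\tfrac{\ell}{M}\bigr)U_Q^2=\tfrac{w^2}{N}\ell(M-\ell)U_Q^2$. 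Here I will stress that $M$ and $N$ are fixed before $Q$ is observed; otherwise the binomial description fails.

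\textbf{Main obstacle.} The delicate point is justifying that $P_k$ is a genuine stopping state of the unconditioned Brownian motion. Only then does the Doob $h$-transform sampler of Eq.~\eqref{eq:refinement-conditional-density} produce the same joint law as the true bridge, and only then may one condition on $P_k=x$ rather than on the whole sampled prefix. I would handle this by induction on $k$. Each local step samples from $|\partial B_k|^{-1}K_B(s,q)/K_B(p_k,q)$. Multiplying these step densities telescopes to the unconditioned step densities times $K_B(p_k,q)/K_B(p,q)$. This yields exactly $\mathbb P_p\{P_k\in dx\mid Q=q\}=\mathbb P_p\{P_k\in dx\}K_B(x,q)/K_B(p,q)$, which is the identity used in both cases above.
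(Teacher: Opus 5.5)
Your proposal is correct and follows essentially the paper's argument. The strong Markov property gives the conditional exit law $K_B(x,\cdot)$ given $P_k=x$, conditioned additionally on the one-shot acceptance event $A$ in the importance-sampled case; the tower property and the Poisson representation then yield $u(x)$; and the split estimator's conditional mean and variance are the binomial moments given $(Q,U_Q)$. Your telescoping of the $h$-transform step densities is the forward-direction version of the paper's single-step Bayes computation, and it makes explicit the identity $\mathbb P_p\{P_k\in dx\mid Q=q\}=\mathbb P_p\{P_k\in dx\}K_B(x,q)/K_B(p,q)$, which the paper uses in the importance-sampled case without proving it separately.
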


\subsection{Exact Sampling of the Conditioned Transition}
\label{app:refinement-sampler}

Let $d$ denote the spatial dimension. For one refinement step, write
\[
x=p_k,\qquad
a=\|x-p\|,\qquad
r=R_k,\qquad
\ell=\|q-x\|,
\]
and choose
\[
0<r\le\eta(R-a),
\qquad 0<\eta<1.
\]
Since $\ell\ge R-a$, this guarantees $r<\ell$.

Using the Poisson kernel of the outer ball $B(p,R)$,
Eq.~\eqref{eq:refinement-conditional-density} becomes
\begin{equation}
f_k(s\mid q)
=
\frac{\ell^d}
{|\partial B_k|(R^2-a^2)}
\frac{R^2-\|s-p\|^2}
{\|s-q\|^d}.
\label{eq:app-conditioned-density}
\end{equation}

We use the proposal
\begin{equation}
g_k(s\mid q)
=
\frac{\|s-q\|^{-d}}{Z_k},
\qquad
Z_k
=
\frac{
|\partial B_k|\ell^{2-d}
}{
\ell^2-r^2
}.
\label{eq:app-distance-proposal}
\end{equation}
The normalization follows from
\[
\int_{\partial B_k}
\|s-q\|^{-d}\,dA_s
=
\frac{
|\partial B_k|\ell^{2-d}
}{
\ell^2-r^2
}.
\]

\paragraph{Two dimensions.}
Let $e_1=\frac{q-x}{\ell}$ and let $e_2$ be perpendicular to $e_1$.
For $U\sim\mathrm{Uniform}(0,1)$, set
\begin{equation}
\theta
=
2\arctan\!\left[
\frac{\ell-r}{\ell+r}
\tan\!\left(\pi(U-\tfrac12)\right)
\right],
\end{equation}
and
\[
s
=
x+r(\cos\theta\,e_1+\sin\theta\,e_2).
\]
This samples the boundary density proportional to
$\|s-q\|^{-2}$.

\paragraph{Three dimensions.}
Draw independent $U_1,U_2\sim\mathrm{Uniform}(0,1)$ and define
\[
t
=
\frac{\ell^2-r^2}
{\ell+r-2rU_1},
\qquad
\cos\theta
=
\frac{\ell^2+r^2-t^2}
{2\ell r},
\qquad
\phi=2\pi U_2 .
\]
Using an orthonormal frame with
$e_3=\frac{q-x}{\ell}$,
\[
s
=
x+r\left(
\sin\theta\cos\phi\,e_1
+
\sin\theta\sin\phi\,e_2
+
\cos\theta\,e_3
\right).
\]
This gives the required density proportional to $\|s-q\|^{-3}$.

\paragraph{Rejection correction.}
Let
\[
A_{\mathrm{geo}}(s)
=
R^2-\|s-p\|^2,
\qquad
A_{\max}
=
R^2-(a-r)^2 .
\]
From Eqs.~\eqref{eq:app-conditioned-density} and
\eqref{eq:app-distance-proposal},
\[
\frac{f_k(s\mid q)}{g_k(s\mid q)}
=
\frac{
\ell^2 A_{\mathrm{geo}}(s)
}{
(R^2-a^2)(\ell^2-r^2)
}.
\]
Hence a proposal $s\sim g_k$ is accepted with probability
\[
\frac{A_{\mathrm{geo}}(s)}{A_{\max}}.
\]
The accepted density is exactly $f_k$, so no likelihood correction is
attached to an accepted local transition.

Its average acceptance probability is
\begin{equation}
P_{\mathrm{acc}}
=
\frac{
(R^2-a^2)(1-r^2/\ell^2)
}{
R^2-(a-r)^2
}.
\end{equation}
Let $\delta=R-a$. Since $\ell\ge\delta$ and
$r\le\eta\delta$,
\begin{align}
P_{\mathrm{acc}}
&\ge
\frac{
(R^2-a^2)(1-r^2/\delta^2)
}{
R^2-(a-r)^2
}\\
&=
\frac{
(R+a)(1-r/\delta)
}{
R+a-r
}
\ge
1-\frac r\delta
\ge
1-\eta .
\end{align}
Thus $\eta=\frac{1}{2}$ gives at most two proposal draws on average.

Local rejection here is fundamentally different from endpoint rejection:
a rejected local candidate is redrawn at the same state until accepted
because this procedure samples the desired conditional transition itself.
The endpoint correction, by contrast, must be executed only once for a
fixed retained exit.

\subsection{Comparison with Independent Proxy Sampling}
\label{app:proxy-comparison}

Suppose the retained outer-sphere exit is $Q\sim r$.
Introducing an algebraic proxy $z\in B$ while preserving the same exit
sample gives
\[
\frac{K_B(y,Q)}{K_B(z,Q)}
\frac{K_B(z,Q)}{r(Q)}U_Q
=
\frac{K_B(y,Q)}{r(Q)}U_Q
=: \widehat u_B(y),
\]
so such a proxy is exactly equivalent to direct reuse.

Now let $\overline C\subset B$, let $y\in C$, and sample
$Z$ uniformly on $\partial C$, independently of $(Q,U_Q)$.
The corresponding estimator is
\[
\widehat u_C(y)
=
|\partial C|K_C(y,Z)
\frac{K_B(Z,Q)}{r(Q)}U_Q .
\]
Conditioning on $(Q,U_Q)$ and using the composition of harmonic measures,
\begin{align}
\mathbb E[\widehat u_C(y)\mid Q,U_Q]
&=
\frac{U_Q}{r(Q)}
\int_{\partial C}
K_C(y,z)K_B(z,Q)\,dA_z\\
&=
\frac{K_B(y,Q)}{r(Q)}U_Q
=
\widehat u_B(y).
\end{align}
Therefore, whenever second moments are finite,
\[
\operatorname{Var}[\widehat u_C(y)]
\ge
\operatorname{Var}
\!\left[
\mathbb E[\widehat u_C(y)\mid Q,U_Q]
\right]
=
\operatorname{Var}[\widehat u_B(y)].
\]
Thus independent proxy sampling provides no generic variance improvement.
Conditional refinement differs because the intermediate state and the
retained exit are sampled jointly, yielding
$Q\mid P_k=x\sim K_B(x,\cdot)$ and therefore legitimate reusable
continuation states.

\subsection{Proof of Hub Reuse}
\label{app:hub-reuse-proof}


\begin{proof}
We first consider a retained interface group $j$.
Let $S_j$ denote its supporting sphere and let the
retained exit $Q_j$ have area density $r_j$ on
$\partial S_j$. By assumption, the referenced
downstream estimator is conditionally unbiased:
\begin{equation}
\mathbb E[U_{Q_j}\mid Q_j=q]
=
u(q).
\label{eq:app-hub-downstream-unbiased}
\end{equation}
The interface point $Z$ is sampled independently of
$(Q_j,U_{Q_j})$ from density $g_j$ on the supported
patch $\Gamma_j$.

For a fixed interface position $Z=z$, the estimator
in Eq.~\eqref{eq:hub-interface-continuation} satisfies
\begin{align}
\mathbb E[\widehat U_j(Z)\mid Z=z]
&=
\int_{\partial S_j}
r_j(q)
\frac{K_{S_j}(z,q)}{r_j(q)}
\mathbb E[U_{Q_j}\mid Q_j=q]
\,dA_q
\\
&=
\int_{\partial S_j}
K_{S_j}(z,q)u(q)\,dA_q
\\
&=
u(z).
\label{eq:app-hub-interface-unbiased}
\end{align}
The final equality is the Poisson representation of
the harmonic continuation inside $S_j$.
Thus every generated interface has the same pointwise
continuation semantics as an ordinary WoS state.

Now consider reuse from a query $x\in H$ through this
single group. Since $Z\sim g_j$ and
$g_j$ is supported on $\Gamma_j$,
\begin{align}
&\mathbb E\!\left[
\frac{K_H(x,Z)}{g_j(Z)}
\widehat U_j(Z)
\right]
\\
&\qquad=
\int_{\Gamma_j}
g_j(z)
\frac{K_H(x,z)}{g_j(z)}
\mathbb E[\widehat U_j(Z)\mid Z=z]
\,dA_z
\\
&\qquad=
\int_{\Gamma_j}
K_H(x,z)u(z)\,dA_z.
\label{eq:app-hub-patch}
\end{align}
Hence an individual group estimates exactly the
Poisson contribution of its supported patch.
If several such estimators were simply added,
overlapping patches would therefore be counted
multiple times.

We next consider the mixture construction.
Let
\[
\mathbb P(J=j)=\beta_j,
\qquad
Z\mid J=j\sim g_j,
\]
so that the joint density is
\begin{equation}
\mathbb Q\{J=j,Z\in dA_z\}
=
\beta_jg_j(z)\,dA_z,
\label{eq:app-hub-joint-mixture}
\end{equation}
and the marginal interface density is
\[
m(z)
=
\sum_{j=0}^{L}\beta_jg_j(z).
\]
Because the fallback group satisfies
$g_0(z)=\frac{1}{|\partial H|}$ and $\beta_0>0$,
we have $m(z)>0$ everywhere on $\partial H$.

For every retained group $j\ge1$,
Eq.~\eqref{eq:app-hub-interface-unbiased} gives
\[
\mathbb E[\widehat U_j(Z)\mid J=j,Z=z]
=
u(z).
\]
By construction, the fallback continuation also
satisfies
\[
\mathbb E[\widehat U_0(Z)\mid J=0,Z=z]
=
u(z).
\]
Therefore a single mixture sample contributes
\[
Y(x)
=
\frac{K_H(x,Z)}{m(Z)}
\widehat U_J(Z),
\]
whose expectation is
\begin{align}
\mathbb E[Y(x)]
&=
\sum_{j=0}^{L}
\int_{\partial H}
\beta_jg_j(z)
\frac{K_H(x,z)}{m(z)}
u(z)\,dA_z
\\
&=
\int_{\partial H}
\frac{
\sum_{j=0}^{L}\beta_jg_j(z)
}{
m(z)
}
K_H(x,z)u(z)\,dA_z
\\
&=
\int_{\partial H}
K_H(x,z)u(z)\,dA_z
\\
&=
u(x).
\label{eq:app-hub-mixture-unbiased}
\end{align}
Thus each mixture sample is an unbiased estimator of
the continuation at $x$.

Finally,
Eq.~\eqref{eq:hub-mixture-estimator} averages $N$
such interface contributions:
\[
\widehat u_H(x)
=
\frac1N\sum_{n=1}^{N}Y_n(x).
\]
Linearity of expectation yields
\[
\mathbb E[\widehat u_H(x)]
=
\frac1N\sum_{n=1}^{N}
\mathbb E[Y_n(x)]
=
u(x).
\]
Independence among the $Y_n$ is not required for this
expectation identity. In particular, multiple
interfaces may reference the same retained downstream
continuation and therefore remain correlated; such
correlation affects variance but not unbiasedness.

The graph weights
\[
w_{x\to Z_n}
=
\frac{K_H(x,Z_n)}{N\,m(Z_n)},
\qquad
w_{Z_n\to Q_j}
=
\frac{K_{S_j}(Z_n,Q_j)}{r_j(Q_j)}
\]
are precisely the factors appearing in the estimator
above, so the graph representation implements the same
unbiased mixture estimator.
\end{proof}

\section{Interface Construction by Path Refinement}
\label{app:fine-tuning-proof}

\subsection{Interface Construction}

We describe two ways to expose a retained continuation on a spherical
interface. Let $H=B(p_0,R_H)$ be a hub and
$S=B(q_0,R_S)$ the supporting ball of a retained edge
$q_0\to q_1$. Depending on their geometry, we construct an interface
$z\in\partial H$ together with a reference to the existing downstream
continuation.

\textbf{Intersecting spheres.}
Suppose $q_0\in H$ and $S$ partially overlaps $H$.
We condition the retained step on its exit $q_1$ and walk inside
$D=S\cap H$, whose signed distance is
\[
\operatorname{SDF}_D(x)
=\max\{\|x-q_0\|-R_S,\;\|x-p_0\|-R_H\}.
\]
For a local ball $B_k$ centered at the current position $x_k$ inside $D$,
the next position $x_{k+1}$ has density
\[
\frac{1}{|\partial B_k|}
\frac{K_S(x_{k+1},q_1)}{K_S(x_k,q_1)}.
\]
Let $X_t$ denote Brownian motion started at $q_0$, and let
$\tau_H$ and $\tau_S$ be its first exit times from $H$ and $S$.
Define $E=\{\tau_H<\tau_S\}$.
When $E$ occurs, the first crossing
$z=X_{\tau_H}\in\partial H$ is inserted into the retained edge with
\[
\omega_{q_0\to z}=\omega,
\qquad
\omega_{z\to q_1}=1,
\]
where $\omega$ is the original averaging coefficient.
If the conditioned segment reaches $\partial S$ first, the current edge is
left unchanged and refinement may continue on subsequent retained edges.
For an importance-sampled $q_1$, we apply the one-shot endpoint
correction in Appendix~\ref{app:refinement-endpoints}; accepted branches place
the compensating factor $M$ on the first edge.

\textbf{Containing spheres.}
If $\overline H\subset S$, no reconstructed crossing is required.
Sampling $z$ uniformly on $\partial H$, independently of the retained exit
$q_1\sim r_S$, defines $\widehat U_z$ and the original edge correction can be factorized as
\begin{equation}
\begin{aligned}
\widehat U_z&=\frac{K_S(z,q_1)}{r_S(q_1)}U_{q_1},\\
\omega_{q_0\to z}&=\frac{K_S(q_0,q_1)}{K_S(z,q_1)},
\qquad
\omega_{z\to q_1}=\frac{K_S(z,q_1)}{r_S(q_1)}.
\end{aligned}
\label{eq:contained-interface}
\end{equation} so their product equals the original
$\frac{K_S(q_0,q_1)}{r_S(q_1)}$.
A query $x\in H$ reuses this interface through
$\omega_{x\to z}=|\partial H|K_H(x,z)$.

\begin{proposition}[Validity of interface constructions]
\label{prop:fine-tuning-validity}
For an ordinary retained exit, suppose its downstream estimator remains
conditionally unbiased given the constructed interface. Both constructions
then produce valid continuation states:
$\mathbb E[U_Z\mid E,Z=z]
=\mathbb E[\widehat U_z\mid Z=z]=u(z)$.
For the containing construction, where $Z$ is uniform on $\partial H$,
$\mathbb E[|\partial H|K_H(x,Z)\widehat U_Z]=u(x)$ for every $x\in H$.
The intersecting construction instead follows its first-crossing law;
its interfaces do not in general have uniform density on $\partial H$.
The crossing-state identity also holds for importance-sampled retained exits
after the one-shot endpoint correction.
\end{proposition}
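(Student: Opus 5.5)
The plan is to treat the two constructions separately, reducing each to facts already established in Appendix~\ref{app:refinement-proof} and Appendix~\ref{app:hub-reuse-proof}.

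\textbf{Intersecting construction.} I will work with the pair (stopping state, retained exit) exactly as in Eq.~\eqref{eq:app-refinement-joint}, but with the stopping time $\tau_H\wedge\tau_S$ in place of $P_k$. The conditioned walk inside $D=S\cap H$ uses the Doob $h$-transform with $h=K_S(\cdot,q_1)$. By the argument already given for Eq.~\eqref{eq:refinement-conditional-density}, the reconstructed segment is then distributed as Brownian motion from $q_0$ conditioned on $X_{\tau_S}=q_1$. This holds because local balls stay in $D\subset S$, so each step is an exact conditioned transition.

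On $E=\{\tau_H<\tau_S\}$, the strong Markov property at $\tau_H$ gives
\[
\mathbb P_{q_0}\{E,\,X_{\tau_H}\in dz,\,X_{\tau_S}\in dA_q\}
=\mathbb P_{q_0}\{E,\,X_{\tau_H}\in dz\}\,K_S(z,q)\,dA_q .
\]
Hence $Q\mid(E,Z=z)\sim K_S(z,\cdot)$. Using the assumed conditional unbiasedness of $U_{q_1}$ and independence of the refinement randomness, $\mathbb E[U_Z\mid E,Z=z]=\int K_S(z,q)u(q)\,dA_q=u(z)$ by the Poisson representation in $S$.

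I would also note why the law of $Z$ on $\partial H$ is not uniform: it is the first-crossing law $\mathbb P_{q_0}\{E,X_{\tau_H}\in dz\}$ renormalized, which depends on $q_0$ and $S$. This suffices for the non-uniformity claim; a single explicit configuration could serve as a counterexample if needed. The unit weight on $z\to q_1$ is justified because the edge coefficient of $q_0$ is unchanged.

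\textbf{Containing construction.} Here $Z$ is independent of $(q_1,U_{q_1})$. The identity $\mathbb E[\widehat U_z\mid Z=z]=u(z)$ is exactly Eq.~\eqref{eq:app-hub-interface-unbiased} with $\Gamma_j=\partial H$. For the query estimator, I would specialize Eq.~\eqref{eq:app-hub-patch} with $g=1/|\partial H|$, giving $\int_{\partial H}K_H(x,z)u(z)\,dA_z=u(x)$ for $x\in H$. This uses $\overline H\subset S\subset\Omega$, so that $u$ is harmonic on $\overline H$. I would also check that the factorized weights in Eq.~\eqref{eq:contained-interface} multiply back to the original coefficient, so the value at $q_0$ is unchanged.

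\textbf{Importance-sampled exits.} For an importance-sampled retained exit, I would invoke Appendix~\ref{app:refinement-endpoints}. After the one-shot acceptance event $A$, the exit has law $K_S(q_0,\cdot)$ given $A$, as in Eq.~\eqref{eq:app-endpoint-correction}. The bridge computation there transfers verbatim, with the state at $\tau_H$ replacing $P_k$, so the crossing identity holds conditionally on $A$.

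\textbf{Main obstacle.} The delicate step is justifying that the discrete sequence of conditioned WoS steps, stopped upon reaching $\partial H$ within tolerance, realizes the continuous first crossing of $\partial H$ before $\partial S$ under the bridge law. I would first handle this by the usual WoS argument: the step-stopped chain matches Brownian motion at successive stopping times, and the $\varepsilon$-shell approximation is treated as in standard WoS. Only then would I apply the strong Markov identity above.
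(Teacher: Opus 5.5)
Your proposal is correct and follows the paper's own argument essentially step for step: the conditioned-walk construction law coincides with the Brownian joint law of $(E, X_{\tau_H}, X_{\tau_S})$, the strong Markov property at $\tau_H$ gives $Q_1\mid(E,Z=z)\sim K_S(z,\cdot)$, the containing case is the nested Poisson computation (which you obtain by specializing the hub-reuse identities rather than rederiving them), and importance-sampled exits go through the one-shot endpoint correction. The paper simply works with the idealized exact first crossing, so it never raises your discretization concern, and it gives no proof of non-uniformity; the cleanest justification is that $Z\in\partial H\cap\overline S$, so its density vanishes on $\partial H\setminus\overline S$, which has positive area whenever $\overline H\not\subset\overline S$.
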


\begin{proof}[Proof sketch]
For intersecting spheres, the conditioned walk and strong Markov property
give $Q_1\mid(E,Z=z)\sim K_S(z,\cdot)$; for containing spheres, validity
follows from the nested Poisson representation. Importance-sampled exits
use the same one-shot endpoint correction as
Proposition~\ref{prop:refinement-validity}. The following subsections give
the detailed arguments for both geometries.
\end{proof}

Several interfaces may share the same retained continuation.

\subsection{Proof for Intersecting Spheres}

Let $S=B(q_0,R_S)$ and $H=B(p_0,R_H)$ with $q_0\in H$.
Let
\[
\tau_S=\inf\{t:X_t\notin S\},
\qquad
\tau_H=\inf\{t:X_t\notin H\},
\]
and define
\[
E=\{\tau_H<\tau_S\},
\qquad
Z=X_{\tau_H}\quad\text{on }E.
\]
The retained exit is
$Q_1=X_{\tau_S}$, whose ordinary exit density is
\[
\mathbb P\{Q_1\in dq_1\mid q_0\}
=
K_S(q_0,q_1)dA_{q_1}.
\]

Conditioned on $Q_1=q_1$, the Doob transform with
$h(x)=K_S(x,q_1)$ is used until the path reaches
$\partial(S\cap H)$. Hence the construction law $\mathbb Q$ satisfies
\begin{align}
&\mathbb Q
\{E,Z\in dz,Q_1\in dq_1\}\\
&\quad=
K_S(q_0,q_1)dA_{q_1}\,
\mathbb P
\{E,Z\in dz\mid Q_1=q_1\}\\
&\quad=
\mathbb P
\{E,Z\in dz,Q_1\in dq_1\}.
\end{align}
On $E$, the strong Markov property at $\tau_H$ gives
\[
\mathbb P
\{E,Z\in dz,Q_1\in dq_1\}
=
\mathbb P\{E,Z\in dz\}
K_S(z,q_1)dA_{q_1}.
\]
Therefore
\begin{equation}
\mathbb Q
\{Q_1\in dq_1\mid E,Z=z\}
=
K_S(z,q_1)dA_{q_1}.
\label{eq:app-interface-conditional-law}
\end{equation}

If the retained downstream estimator is conditionally unbiased,
\[
\mathbb E[U_{q_1}\mid q_1,E,Z=z]
=
u(q_1),
\]
then
\begin{align}
\mathbb E[U_Z\mid E,Z=z]
&=
\int_{\partial S}
K_S(z,q_1)u(q_1)\,dA_{q_1}\\
&=
u(z).
\end{align}
Thus the crossing point is a valid continuation state.

When $E$ occurs, replacing
$q_0\to q_1$ by
$q_0\to z\to q_1$ with coefficients
$\omega$ and $1$ leaves the realized downstream contribution of that
retained sample unchanged. If $E^c$ occurs, the current edge is left
unchanged.

If $q_1$ was generated from an importance density rather than the ordinary
Poisson law, we first apply the one-shot endpoint correction from
Appendix~\ref{app:refinement-endpoints}. Conditional on acceptance, the exit
again follows $K_S(q_0,\cdot)$ and the argument above is unchanged.
The compensating factor $M$ is placed on the first edge; a rejected attempt
contributes zero.

\subsection{Proof for Containing Spheres}

Assume $\overline H\subset S$.
Harmonicity gives the nested Poisson representation
\begin{equation}
u(x)
=
\int_{\partial H}
K_H(x,z)
\int_{\partial S}
K_S(z,b)u(b)\,dA_b\,dA_z,
\qquad x\in H .
\label{eq:app-nested-poisson}
\end{equation}

Let the retained exit $Q_1$ be sampled from area density $r_S$ and let
$U_{Q_1}$ be conditionally unbiased.
Independently sample $Z$ uniformly on $\partial H$ and define
\[
\widehat U_Z
=
\frac{
K_S(Z,Q_1)
}{
r_S(Q_1)
}
U_{Q_1}.
\]
Conditioning on $Z=z$,
\begin{align}
\mathbb E[\widehat U_Z\mid Z=z]
&=
\int_{\partial S}
r_S(q)
\frac{K_S(z,q)}{r_S(q)}
u(q)\,dA_q\\
&=
\int_{\partial S}
K_S(z,q)u(q)\,dA_q\\
&=
u(z).
\end{align}
Thus the auxiliary interface has the same pointwise continuation semantics
as an actual WoS state, although it is not a reconstructed crossing of the
old path.

For any $x\in H$, define
\[
\widehat U_H(x)
=
|\partial H|K_H(x,Z)\widehat U_Z .
\]
Since $Z$ is uniform on $\partial H$,
\begin{align}
\mathbb E[\widehat U_H(x)]
&=
\int_{\partial H}
\frac{dA_z}{|\partial H|}
|\partial H|K_H(x,z)u(z)\\
&=
u(x).
\end{align}

Finally, the edge factorization
\[
\omega_{q_0\to z}
=
\frac{K_S(q_0,q_1)}
     {K_S(z,q_1)},
\qquad
\omega_{z\to q_1}
=
\frac{K_S(z,q_1)}
     {r_S(q_1)}
\]
satisfies
\[
\omega_{q_0\to z}\omega_{z\to q_1}
=
\frac{K_S(q_0,q_1)}
     {r_S(q_1)},
\]
and therefore preserves the original sampled correction exactly.
Any pre-existing sample-averaging coefficient is retained on the first
edge.

Multiple interface points may share the same $U_{Q_1}$.

\section{Conditional Repair and Bidirectional Support}
\label{app:conditional-repair}

\subsection{Conditional Segments in the Updated Geometry}

Let $O$ be the inserted obstacle, let $S=B(c,R)$
support the retained transition $o\to p$ with $p\in\partial S$, and let
$D=S\setminus\overline O$ be its local repair
domain. The origin $o$ lies in $D$; it need not be the center $c$, as an
attached transition can inherit another sampling sphere. In three dimensions,
define the positive harmonic function
\[
h_p(x)=\frac{R^2-\|x-c\|^2}{\|x-p\|^3},\qquad x\in S.
\]
It is proportional to $K_S(x,p)$, so the proportionality factor cancels in
the conditioned transition. At a working position $x_k$, choose
\[
\rho_k=\kappa\min\{R-\|x_k-c\|,\operatorname{dist}(x_k,O)\},
\qquad 0<\kappa<1,
\]
and sample $y\in\partial B(x_k,\rho_k)$ with density
\[
\frac{1}{4\pi\rho_k^2}\frac{h_p(y)}{h_p(x_k)}.
\]
Use the exact sampler in Appendix~\ref{app:refinement-sampler} at this radius.

An obstacle-boundary tolerance terminates the segment at the closest obstacle
point. This test precedes reconnection at the original sphere. Reconnection
occurs when the remaining sphere clearance falls below a second tolerance,
and is disabled when $p\in\overline O$. Finite-tolerance reconnection omits
possible collisions on the remaining conditioned segment. Queries must remain
in the updated domain. A boundary-projection edge without a recorded sphere is
replaced by a fresh WoS continuation in the updated domain.

The working positions are temporary. The resolved edge retains its coefficient
and references either the old exit or a new boundary endpoint.

\subsection{A Local Correction from Shared Reverse Paths}

For a legal retained exit $p$, let $v$ denote the value supplied by its
continuation after repair. In the idealized continuous limit, the expected
conditional continuation in $D$ is
\begin{equation}
F_p(x)=v+\frac{w_p(x)}{h_p(x)},
\label{eq:repair-conditional-correction}
\end{equation}
where $w_p$ is harmonic in $D$, with boundary data
\[
\begin{aligned}
w_p(b)&=h_p(b)\bigl(g(b)-v\bigr),
&&b\in\partial O\cap S,\\
w_p(b)&=0,
&&b\in\partial S\setminus\overline O.
\end{aligned}
\]
The Doob transform makes $h_pF_p$ harmonic in $D$. Subtracting
$h_pv$ gives the stated harmonic correction: the object supplies $g$, while
the old-sphere exit supplies $v$. The reverse term therefore supplies the
inserted object's correction to the retained continuation.
If $p$ is swallowed, take $v=0$ and prohibit old-exit reconnection.

We generate a shared reverse-path library in the updated physical domain,
with independent surface starts on each boundary component. The local
correction uses the inserted-object paths restricted to their first exit
from $S$. An object-surface start outside $S$ contributes zero. Balls wholly
inside $S$ are retained; a ball crossing $\partial S$ is subdivided using
conditional refinement on its recorded exit, and the path is stopped on its
first local exit. Only the substep Green contributions are used for that
crossing ball. This restriction preserves shared global sampling while
providing the killed paths needed in $D$.

For the inserted-object stratum, let $N_R$ be its number of reverse paths,
including zero contributions. Path $j$ starts at a surface sample
$b_j\sim\pi_O$ displaced by $\delta_j$ along the normal into $D$.
The displacement is chosen small relative to the original-sphere clearance.
Let $\mathcal B_j$ be its restricted sequence of sampling balls. Define
\[
\begin{aligned}
\mathcal G_j(x)
&=\sum_{B\in\mathcal B_j}
G_B(x,c_B)\mathbf1_{x\in B},\\
\lambda_j(x)
&=\frac{h_p(b_j)\mathcal G_j(x)}
{h_p(x)\pi_O(b_j)\delta_j},
\end{aligned}
\]
where $c_B$ is the center of $B$, $\pi_O$ is an area density, and $G_B$ is
the Dirichlet Green function of the ball. Set $\lambda_j=0$ for paths whose
surface starts outside $S$. The finite-offset estimate is
\begin{equation}
\widehat F_p(x)
=v+\frac{1}{N_R}\sum_{j=1}^{N_R}
\lambda_j(x)\bigl(g(b_j)-v\bigr).
\label{eq:repair-reverse-correction}
\end{equation}
The inward displacement approximates a boundary normal derivative; physical
and sampling-sphere termination tolerances introduce additional approximations.
The zero correction on the old sphere is integrated exactly. Other physical
boundary components are already represented through the retained continuation.

\subsection{Connection and Graph Representation}

The forward segment uses the same conditional transitions with or without
reverse support. A connection is considered only within a prescribed distance
of the object, outside the reverse-offset layer and away from the original
sphere's termination shell. To prevent a reverse path from selecting its own
evaluation location, its contribution is settled only when the forward
position is covered by at least $K$ distinct other reverse paths, where $K$
is a positive integer. Multiple balls from one path count once. Each pending
path is evaluated at its first eligible position, including a zero evaluation
when that path does not cover the position. The normalization remains $N_R$.
The common forward segment continues until all contributions settle or reach
an ordinary object hit or old-exit reconnection. This leave-one-out rule keeps
each path's stopping decision independent of that path's reverse samples.

At a reverse connection, contribution $j$ supplies
$(1-\lambda_j)v+\lambda_jg(b_j)$, scaled by $\frac{1}{N_R}$; unresolved contributions
use the usual boundary hit or reconnection. Their combination is stored as
edges to the retained exit and the sampled physical boundary points, with
the original edge coefficient applied once. Coefficients may be signed;
$v$ remains a graph dependency, so subsequent boundary-value updates propagate
through the same DAG.

\end{document}